\newtheorem{theorem}{Theorem}[section]
\newtheorem{lemma}{Lemma}[section]
\newtheorem{proposition}{Proposition}[section]
\theoremstyle{remark}
\newtheorem{remark}{Remark}[section]
\newtheorem{example}{Example}[section]
\newtheorem{convention}{Convention}[section]
\theoremstyle{definition}
\newtheorem{definition}{Definition}[section]
\newcommand\orddec{{\sf orddec}}
\newcommand\Nodes{N}
\newcommand\arity{\mathop{\mathrm{arity}}}
\author{William Blum}
\title{Type homogeneity is not a restriction for safe recursion schemes}
\begin{document}
\maketitle
\begin{abstract}
Knapik \emph{et al.\ }introduced the \emph{safety restriction} which constrains both the types and syntax of the production rules defining a higher-order recursion scheme \cite{KNU02}. This restriction gives rise to an equi-expressivity result between order-$n$ pushdown automata and order-$n$ safe recursion schemes, when such devices are used as tree generators.

We show that the typing constraint of safety, called \emph{homogeneity}, is unnecessary in the sense that imposing the syntactic restriction alone is sufficient to prove the equi-expressivity result for trees.
\end{abstract}

\section{Introduction}
The concept of \emph{safety} was introduced by Knapik \emph{et al.} in the
context of higher-order recursion schemes and studied further in \cite{KNU02,
demirandathesis}. In the original paper they show that, when used as tree
generators, safe recursion schemes are equivalent to safe pushdown automata
\cite{KNU02}.
In its original definition, safety consists of two fairly technical constraints:
\begin{enumerate}
\item \textbf{Type-Homogeneity} which imposes a type constraint on the rules of
the recursion scheme;
\item \textbf{A syntactic restriction} on the relative order of sub-terms
occurring in the right hand side of the recursion scheme rules.
\end{enumerate}

Knapik \emph{et al.}'s result begged a question: what is the automaton
equivalent of `unsafe` recursion schemes?
This question was answered a few years later in a paper introducing a new kind
of automata called Higher-Order Collapsible Pushdown Automata (CPDA), which is
shown to be equivalent to (possibly unsafe) higher-order recursion schemes
\cite{hmos-lics08}. More specifically they describe an algorithm that, given an
order-$n$ recursive schemes $G$, constructs an equivalent order-$n$ CPDA (in the
sense that they both generate the same tree), and conversely. We will refer to
them as the \emph{HMOS transformation} procedures.

The safety restriction was further studied in the context of the lambda calculus
in the author's D.Phil.\ thesis \cite{BlumPhd}. Somewhat surprisingly, in the
lambda calculus setting, the type-homogeneity constraint is not necessary to
define a useful calculus. One can define a notion of safe simply-typed lambda
calculus by imposing solely the syntactic constraint above, that we name
`\emph{incremental-binding}', onto the standard simply-typed lambda calculus.
The author's D.Phil.\ thesis gives expressivity results and presents a fully-abstract game
model for this calculus \cite{BlumPhd}.

This paper reconciles the above observation made in the lambda calculus with higher-order recursion schemes: we show that if a recursion scheme meets the syntactic criteria of the safety restriction, then regardless of whether the type homogenity is met, the automaton constructed using the HMOS procedure from \cite{hmos-lics08} can be converted into an equivalent order-$n$ (non-collapsible) push-down automaton (PDA). Conversely, given an order-$n$ PDA, the recursion scheme generated by the HMOS transformation induces an equivalent incrementally-bound and homogeneously-typed recursion scheme.
Consequently, for generating trees, pushdown automata are just as expressive as incrementally-bound recursion schemes.
(In what follows, by abuse of language, we will use ``\emph{incremental-binding}" to refer to the syntactic constraint 2.\ above, even though the concept of binders is foreign to recursion schemes.)

Further, composing the above two transformations yields a methods to
convert any incrementally-bound recursion scheme into an equivalent \emph{safe} one in the original sense ({\it i.e.,} incrementally-bound and type-homogeneous). Type-homogeneity is therefore not a proper restriction for safe recursion schemes. This generalizes Knapik \emph{et al.}'s result on equi-expressivity of pushdown automata and \emph{safe} recursion schemes \cite{KNU02}.
\vspace{1em}

\begin{remark}
The result presented in this paper was privately circulated for the first time in 2009 and shared on my personal website but was never published in a journal or conference \cite{blum-safehomogeneity-note}.
The result was first conjectured in the author's thesis in \cite{BlumPhd}. A partial proof was then privately circulated by the author in a 2008 note. Based on this note, Broadbent \cite{Broadbent2009} adjusted the definition of \emph{stack safety} to make the inductive proof work, which filled the remaining gap in the proof. The author then circulated an updated proof based on yet another definition of \emph{stack safety} which is also the one used in the present paper.
\end{remark}

\section{Background}
We first recall some basic notations and recall the definition of higher-order automata and higher-order recursion schemes from the literature \cite{KNU02,hague-collaps-full}.

We consider simple types over a single atom $o$. We call $\Sigma$ a \defname{ranked alphabet} if each symbol $f \in \Sigma$ is assigned a type of the form $o \longrightarrow  \cdots \longrightarrow o \longrightarrow  o$. We write $\arity(f)$ for any typed-term $f$ to denote the arity of its type.

Given a set of typed symbols $S$, the set of \defname{applicative terms} generated from $S$, is defined as the closure of $S$ under the application rule: if $m$ and $n$ are two applicative terms of respective type $A\rightarrow B$ and $A$ then $(m n)$ is also an applicative term of type $B$.

A \defname{$\Sigma$-labelled tree} is a possibly infinite tree where each nodes of the tree is labelled with a symbol in $\Sigma$ with arity matching the number of children nodes in the tree.

\subsection{Higher-order stacks and recursion schemes}

\paragraph{Higher-order stacks}
The notion of higher-order stacks is defined inductively. We first fix the base alphabet as a finite set $\Gamma$. An \emph{order-$1$} stack over $\Gamma$ is a sequence of elements of $\gamma \in \Gamma^*$. For $n \in \mathbb{N}$ an \emph{order-$(n+1)$}-stack is defined as a stack of order-$n$ stacks.
The order of a stack $s$ is written $\ord(s)$. The $i^{th}$ elements in a stack $s$, for $i\geq 0$, is denoted $s_i$ so that a stack $s$ consists of ordered elements $s_1$, $\dots$ $s_k$ where $k\geq0$ is called the length of the stack. The stack can then be represented by the notation $[s_1, \cdots s_k]$. The first element $s_1$ and last element $s_k$ are respectively called the `bottom' and `top' elements of the stack.
The empty stack of order $1$ is denoted $\bot_1$ and represent the stack consisting of no element. We then define the empty $(n+1)$-stack $\bot_{n+1}$ as the singleton sequence $[ \bot_n ]$.

The following operations are defined for an order-$1$ stack $s$:
\begin{eqnarray*}
	push_1^{a} [ \; s_1 \cdots s_m \; ] & := & [ \; s_1 \cdots s_m\, a \; ] \\
	pop_1 [ \; s_1 \cdots s_m s_{m+1} \; ] & : = & [ \; s_1 \cdots s_m \; ] \\
\end{eqnarray*}
and for an order-$(n+1)$ stacks $t$:
\begin{eqnarray*}
	push_{n+1} [ \; t_1 \cdots t_m \; ] & := & [ \; t_1 \cdots t_m\; t_m \; ] \\
	pop_{n+1} [ \; t_1 \cdots t_m t_{m+1} \; ] & : = & [ \; t_1 \cdots t_m \; ].
\end{eqnarray*}

The $top_i$ of a stack for $i\geq 1$ is defined as the top $(i-1)$-stack of $s$ (so that $pop_i$ returns $s$ with its top $(i-1)$-stack removed).
Formally, for $s=[ \; s_1 \cdots s_{l+1} \; ]$, $l\geq0$ and $1 \leq i\leq \ord(s)$:
\begin{eqnarray*}
	top_{i} [ \; s_1 \cdots s_{l+1} \; ] & := &
	\begin{cases}
		s_{l+1} & \mbox{if $i = \ord(s)$} \\
		top_i s_{l+1} & \mbox{if $i < \ord(s)$.}
	\end{cases}
\end{eqnarray*}

Following \cite{hmos-lics08}, we define the natural notion of stack prefix: for a given $n$-stack $s$, for any lower-level stack $m$ occurring in $s$, the prefix $s_{\prefixof m}$ of $s$ at $m$ consists of the stack obtained from $s$ by removing all the elements `above' $m$ using a succession of `pop` operations.

Formally:
\begin{definition}[Stack prefix]
Let $s = [s_1 \cdots s_m]$ be a higher-order stack. Then $s_{\prefixof s_i}$ is defined as $[s_1 \cdots s_i]$ for $1 \leq i \leq m$; and for a stack $t$ occurring in $s_i$ we define $s_{\prefixof t}$ as $[s_1 \cdots {s_i}_{\prefixof t}]$.

Similarly we define the strict stack prefix ${s}{< s_i}$ and $s_{<t}$ respectively as
 $[s_1 \cdots s_{i-1}]$ and $[s_1 \cdots t_{< s_{i}}]$.
\end{definition}

For any operation $\phi$ operating on a higher-order stack and $k\geq1$ we denote $\phi^k$ the repeated application of $\phi$ exactly $k$ times: $\phi^k\; s = \phi ( \cdots (\phi\ s) \cdots)$ with $k$ application of $\phi$ for any stack $s$.

\paragraph{Higher-order stacks with links}

We recall the notion of higher-order stack with links where each order-$1$ element is equipped with a link to another stack \cite{hague-collaps-full}:

\begin{definition}[Higher order stack with links (or CPDA stack)]
For a stack alphabet $\Gamma$ and a distinguished bottom-of-stack symbol $\bot\in \Gamma$. An order-$0$ CPDA stack is just a stack symbol. An order-$(n+1)$ CPDA stack $s$ is a non-empty sequence (written $[s_1 \cdots s_l]$) of order $n$ CPDA stacks such that every non-$\bot$ symbol $a$ that occurs in $s$ has a link to a stack of some order $k$ (where $0 \leq k \leq n$) situated below it in $s$; we call the link a $(k + 1)$-link.
\end{definition}

An element of a higher order CPDA stack is written $a^{(j,k)}$ where $a\in \Gamma$ and the exponent $(j,k)$ encodes the pointer associated to the stack symbol. Think of it as a shorthand for the iterated stack operation $pop_j^k$ ({\it i.e.,} $k$ applications of $pop_j$).
The value $j$ is called the \defname{order of the link}, and $k$ is called the \defname{height of the link} for $1 \leq j \leq n$ and $k \geq 1$, such that if $j = 1$ then $k = 1$.

\begin{definition}
Let $s$ be a higher-order stack. We define $s^{\langle j \rangle}$ as the operation that replaces
every link occurring in $s$ of the form $(j,k)$ with $(j,k+1)$. Formally,
\begin{align*}
{a^{(j,k)}}^{\langle j \rangle} &= a^{(j,k+1)}   \\
{a^{(j,k)}}^{\langle j' \rangle} &= a^{(j,k)} &   \mbox{when $j\neq j'$,}\\
[s_1 \ldots s_p]^{\langle j \rangle} &= [s_1^{\langle j \rangle} \ldots s_p^{\langle j \rangle}] \ .
\end{align*}
\end{definition}
This operation clearly commutes with prefixing. We will therefore write $s^{\langle j\rangle}_{\prefixof m}$ as an abbreviation for
$(s^{\langle j\rangle})_{\prefixof m} = (s_{\prefixof m})^{\langle j\rangle}$,
for all stack $s$ and stack-symbol $m$ occurring in $s$.

The $top$ and $pop$ operations are defined identically to standard higher-order stacks.
The $push$ operation is defined similarly to higher-order stacks except that the push operation now assigns a pointer to every element pushed onto the stack. Also, in addition to the standard push and pop operations, the CPDA introduces a new $collapse$ operation that `collapses' a given stack to the target of the pointer associated with the top order-$1$ element in the stack. The idea is that if the top-$1$ element is a symbol $a$ with a link to some $(j - 1)$-stack, then the $collapse$ operation produces the same effect as successively performing $k$ times the $pop_j$ operation.

We reproduce here the definition of the $push$ and $collpase$ operations from \cite{hmos-lics08}:

\begin{definition}[Higher-order CPDA operations]
	\label{cpdastackopearations}
We define CPDA operations in terms of the standard stack operations of an order-$n$ PDA with an adequately defined alphabet encoding elements of the form $b^{(o,h)}$ where $b\in\Gamma$ and link indices $o,h \geq 0$. For $1 \leq i  \leq \ord(s)$ and $2  \leq j  \leq ord(s)$:
\begin{align*}
push_1^{b,i}\ s &= push_1^{b^{(i,1)}}\ s \\
collapse\ s &= pop_o^h s  \hbox{\ where } top_1\ s = a^{(o,h)} \\
push_j \underbrace{[ s_1 \ldots s_{l+1} ]}_{s} &=
\left\{
\begin{array}{ll}
\    [s_1\ \ldots\ s_{l+1}\ s_{l+1}^{\langle j \rangle}]  &\hbox{if $j = \ord{s}$;}\\
\    [s_1\ \ldots\ s_{l+1}\ push_j\ s_{l+1}]  &\hbox{if $j<\ord{s}$.}
\end{array}
\right.
\end{align*}
\end{definition}

\begin{convention}
	In the rest of the paper we will adopt the following abbreviations:
\begin{itemize}
	\item ``$push_1^{a,j}$'' for the operation $push_1\,a^{(j,1)}$;
	\item ``$push_1^a$'' for the operation $push_1\,a^{(j,k)}$ where the components $j$ and $k$ are undetermined.
\end{itemize}
\end{convention}

\begin{convention}[Top stack convention]
\label{conv:top_preserve_links}
In the original definition, the operation $top_i$, that returns the top $(i-1)$-stack of a higher-order stack,
 removes any dangling pointer resulting from the operation. Here, we suppose that $top_i$ is defined in such a way that all pointers are preserved. From an implementation viewpoint, since links are encoded as pairs of integers, this means that $top_i$ just returns an unmodified copy of the top $(i-1)$-stack.
\end{convention}

\begin{definition}
	A tree generating \defname{order-$n$ collapsible pushdown automaton}, $n$-CPDA for short, is a tuple $\langle \Sigma, \Gamma,Q,\delta,q_0 \rangle$ where $\Sigma$ is a ranked alphabet,
	$\Gamma$ is a stack alphabet, $Q$ is a finite set of states, $q_0$
	is the initial state, and $\delta$ is a transition function $ Q \times \Gamma Γ \longrightarrow Q \times Op + Out$ where
	\begin{itemize}
		\item  $Op$ denotes the set of operations on higher-order stacks with links from Definition \ref{cpdastackopearations}.;
		\item $Out = \{(f, q_1 , \cdots ,q_{\arity(f)})  : f \in \Sigma, q_ i \in Q \}$ denotes the set of possible output emitted at each step of the transition function.
	\end{itemize}
	An order-$n$ CPDA stack is called a \defname{configuration} of an order-$n$ CPDA.
\end{definition}

This definition extends the notion of tree-generating \defname{Pushdown Automaton} (PDA) which are similarly defined on regular higher-order stack with no links and without the $collapse$ stack operation.

\paragraph{Tree accepted by a CPDA/PDA}

One can think of a CPDA (resp.\ PDA) as a state machine equipped with a higher-order stacks. At each transition, the CPDA can either (i) applies some operation to the stack and update its state, (ii) output the root node $f\in \Sigma$ of the generated tree together with new states for each branch of the tree root. The \defname{infinite tree accepted} by the automaton can then be obtained by recursively spawning new automatons at each child of the node starting with the specified state $q_i$.
The formal definition of the generated tree can be found in \cite{KNU02,hague-collaps-full}.

\paragraph{Higher-order recursion schemes} \hfill

\begin{definition}
	A \defname{higher-order recursion scheme} is a tuple $\langle
	\Sigma, \mathcal{N}, \mathcal{R}, S \rangle$, where $\Sigma$ is a
		ranked alphabet of \emph{terminals};
		$\mathcal{N}$ is a finite set of typed \emph{non-terminals};
		$S$ is a distinguished ground-type symbol of
		$\mathcal{N}$, called the start symbol;
		$\mathcal{R}$ is a finite set of production rules.
		For each non-terminal $F : (A_1, \ldots, A_n, o) \in \mathcal{N}$ there is exactly one rule of the form:
		$ F z_1 \ldots z_m \rightarrow e$
		where each $z_i$ (called \emph{parameter}) is a
		variable of type $A_i$ and $e$ is an applicative term of type $o$
		generated from the typed symbols in $\Sigma \union \mathcal{N} \union \{z_1:A_1, \ldots, z_m:A_m \}$.

	We say that the recursion scheme is \emph{order-$n$} just in case the order of the highest-order non-terminal is $n$.

\end{definition}

\paragraph{Tree-generated by a recursion scheme}
An applicative term generated from the terminal symbols $\Sigma$ only (without non-terminals), is viewed as a $\Sigma$-labelled tree and called a \defname{value tree}. A recursion scheme defines a (potentially infinite) value tree obtained by unfolding its rewrite rules \emph{ad infinitum}, replacing formal by actual parameters each time, starting from the start symbol $S$. It is formally defined as the least upper bound of the
induced \emph{schematological tree grammar} in the continuous algebra of ranked trees with the appropriate ordering \cite{KNU02,demirandathesis}.

\parpic[r]{
	\raisebox{-30pt}
	{\begin{tikzpicture}[baseline=(root.base),level distance=3ex,inner ysep=0.5mm,sibling distance=13mm]
		\node (root) {$g$}
		child {node {$a$}}
		child {node {$g$}
			child {node {$a$}}
			child {node {$h$}
				child {node {$h$}
					child {node {$\ldots$}}
				}
			}
		};
		\end{tikzpicture}
	}
}
\begin{example}\label{eg:running}
	Let $G$ be the following order-2 recursion scheme:
	\[\begin{array}{rll}
	S & \rightarrow & H \, a\\
	H \, z & \rightarrow & F \, (g \,
	z)\\
	F \, \phi & \rightarrow & \phi \, (\phi \, (F \, h))\\
	\end{array}\]
	with non-terminals $S:o$, $F : ((o, o),o)$, $H:(o,o)$ and terminals $g, h, a$ of arity $2, 1, 0$ respectively.
	Then the tree generated by $G$ is defined by the infinite term
	$g \, a \, (g \, a \, (h \, (h \, (h \,
	\cdots))))$ pictured on the right.
	
\end{example}

\subsection{Computation tree}

Informally, the \emph{computation tree} of a higher-order recursion scheme from \cite{OngLics2006} is defined as a finite tree representation of the \emph{$\eta$-long normal form} of the (possibly infinite) lambda terms inherent to the production rules of the recursion scheme.

Fix a higher-order recursion scheme $R = \langle \Sigma, \mathcal{N}, \mathcal{R}, S \rangle$.
Observe that production rules naturally map to simply-typed lambda terms by (i) currying the rule: replacing variables defined on the left-hand side by a succession of lambda abstractions on the right-hand side, (ii) interpreting terminals and non-terminal occurring in the applicative term of the right-hand side as free variables. Supposed that $F : A_1 \rightarrow \cdots \rightarrow A_m \in \mathcal{N}$ for some $m\geq 0$. Then the rule $F z_1 \ldots z_m \rightarrow f (G (F z_1))$ corresponds to the simply-typed lambda-term:
$$\lambda z_1 \ldots z_m . f (G (F z_1))$$ of type $A_1 \rightarrow \cdots \rightarrow A_m \rightarrow o$ with free variables $f, F, G$ of appropriate type. For each non-terminal $F$ we denote this term $\Lambda(F)$.

We recall that a simply-typed lambda term is in \defname{$\eta$-long normal form} if it can be written $\lambda \overline{x} . s_0 s_1 \ldots s_m$ (where $\lambda \overline{x}$ is an abbreviation for $\lambda x_1 \ldots \lambda x_n$ for some $n\geq 0$) where $s_0 s_1 \ldots s_m$ is of ground type, each $s_j$ for $j\in 1..m$ is in $\eta$-long normal form, and either $s_0$ is a variable and $m\geq0$; or $s_0$ is an abstraction $\lambda\overline{y}.s$ for some $\eta$-long normal form $s$ and $m\geq1$.
It is convenient to write the $\eta$-long normal form of terms of ground type as $\lambda . N$ for some term $N$ where `$\lambda.$' is referred to as a `dummy lambda'.

Observe that a purely applicative term can trivially be converted to $\eta$-long normal form by recursively $\eta$-expanding every subexpression.

We define the computation tree of a simply-typed term as an abstract syntax tree of its eta-long normal form:
\begin{definition}[Computation tree of a simply-typed term (\cite{BlumPhd})]
	Let $M$ be a simply-typed term of type $T$ with variables names $\mathcal{V}$.
	The \defname{computation tree} $\tau(M)$ with labels taken from $ \{ @ \} \union \mathcal{V} \union \{ \lambda x_1 \ldots x_n \ | \ x_1 ,\ldots, x_n \in
	\mathcal{V}, n\in\nat \}$, is defined inductively on the $\eta$-long normal form of $M$ as follows.
	\begin{eqnarray*}
		\tau(\lambda \overline{x} . z s_1 \ldots s_m) &=& \underline{\lambda \overline{x}}\; \langle\; \underline z\; \langle\tau(s_1),\ldots,\tau(s_m)\rangle\rangle\\
		&& \hbox {where $m\geq 0$, $z \in \mathcal{V}$}, \\
 \tau(\lambda \overline{x} . (\lambda y.t) s_1 \ldots s_m) &=& \underline{\lambda \overline{x}}\; \langle\; \underline @ \; \langle \tau(\lambda y.t),\tau(s_1),\ldots,\tau(s_m) \rangle \rangle \enspace\\
&&  \hbox{where $m\geq 1$, $y\in \mathcal{V}$}.
	\end{eqnarray*}
	where $l\langle t_1, \ldots, t_n \rangle$ for $n \geq 0$ succinctly denotes a labelled tree with root labelled $l$ and $n$ ordered children trees $t_1$, \ldots, $t_n$. The underlined expressions correspond to the nodes of the tree for $m>0$, and leaves for $m=0$.
\end{definition}

We borrow terminology from the lambda calculus when referring to the computation tree. In particular we will sometime refer to the `binder` of a variable node as the uniquely defined lambda node that binds the variable in the corresponding lambda term.

Given a simply-typed lambda term $M$ with free variables in $\mathcal{N} \union \Sigma$, we define the \defname{unfolding of $M$} as the simply-typed lambda term $M$ obtained by replacing every variable $N \in \mathcal{N}$ by the term $\Lambda (N)$ using capture-permitting substitution. The unfolding of the computation tree $\tau(M)$ is defined as the computation tree of the unfolding of $M$.

\begin{definition}
The (possibly infinite) \defname{computation tree of a recursion scheme} $R$ is defined as the recursive unfolding
of the computation tree of the simply-typed term $\Lambda(S)$.

(An alternative definition can be found in \cite{OngLics2006}.)
\end{definition}
\begin{remark} The repeated unfolding operation does not incur capture of variables since the lambda terms representing rewrite rules of a recursion scheme only have free variables in $\mathcal{N} \union \Sigma$, and leave all variables in $\mathcal{N} \union \Sigma$ unbound.
\end{remark}

\paragraph{Incremental binding}
We recall that the order of a simple type is defined as $\ord(o) = 0$ for any base type $o$, and $\ord(A \typear B) = \max(\ord(A)+1, \ord(B))$. We define the order of a variable node as the order of its associated simple type, and the order of a lambda node $\lambda x_1 \cdots x_n$ for $n>0$ as $1 + \max_{0\leq i\leq n} \ord{x_i}$.

We now recall the following notion from the author's thesis \cite{BlumPhd,blumong:safelambdacalculus} which relates to the syntactic restriction of safety:
\begin{definition}[\cite{BlumPhd,blumong:safelambdacalculus}]
The computation tree of a closed lambda term is \defname{incrementally-bound} if every variable node $x$ is bound by the first lambda node in the path from it to the root that has order strictly greater than $x$.

By extension, we say that the computation tree of an open term $M$ with free variables $x_1, \cdots x_n$, $n\geq 0$ is incrementally-bound if so is the computation tree of closed term $\lambda x_1 . \cdots \lambda x_n . M$.
\end{definition}

We extend this definition to recursion schemes through finite approximation of the computation tree: a recursion scheme is \defname{incrementally-bound} just if every finite unfolding of the tree is incrementally-bound. (Or equivalently, if every finite tree obtained by pruning some branches of its computation tree is incrementally-bound.)

\subsection{The safety restriction}
\label{sec:safety}

The safety restriction has appeared under different forms in the literature \cite{KNU02,Dam82,demirandathesis,BlumPhd,blumong:safelambdacalculus}.
We include here a reformulation of the definition by Knapik {\it et al.} in the setting of recursion schemes \cite{KNU02}.

\begin{definition}[Type homogeneity]
\label{subsec:typehomogeneity}
We say that a simple type $A_1 \longrightarrow \cdots \longrightarrow A_n \longrightarrow o$ with $n \geq 0$ is \defname{homogeneous}
just if $\ord{A_1} \geq \ord{A_2}\geq \cdots \geq \ord{A_n}$, and each $A_1$, \ldots, $A_n$ is homogeneous \cite{KNU02}.
\end{definition}

\begin{definition}[Safe recursion scheme]
\label{def:safers}
Let $G$ be a higher-order recursion scheme where the \emph{non-terminals all have homogeneous types}.
We say that $G$ is \textbf{unsafe} just if it has a production rule $F z_1 \ldots z_m \rightarrow e$ where $e$ contains a subterm that:$  $
\begin{enumerate}
	\item occurs in {\em operand} position in $e$,
	\item contains a parameter of order strictly less than its order.
\end{enumerate}
By ``operand position'' we mean ``in the second position of some
occurrence of the term application operator''.

A recursion scheme is said to be \textbf{safe} if it is not unsafe.
\end{definition}

The term `safety` comes from the fact that, when converted to lambda-terms, safe recursion schemes can be evaluated without ever having to generate a fresh variable during substitution: $\beta$-redexes can be reduced using (simultaneous) capture-permitting substitution \cite{BlumPhd,blumong:safelambdacalculus,safety-mirlong2004}.

The safe subset of the simply-typed lambda calculus introduced in \cite{blumong:safelambdacalculus,BlumPhd} relates to the notion of safe higher-order recursion schemes in the following sense:
\begin{proposition}[Safe rewrite rules and lambda terms {\cite[Proposition 3.11]{BlumPhd}}]
\label{prop:safelambdacorresp}
Take a recursion scheme $R = \langle \Sigma, \mathcal{N}, \mathcal{R}, S \rangle$.
For every non-terminal $N$, the associated rewriting rule is safe if and only if the simply-type term $\Lambda(N)$, with free variables in $\Sigma\union\mathcal{N}$, is derivable with the typing judgment of the safe lambda calculus of \cite{BlumPhd} and all types involved in the typing derivation are homogeneous.
\end{proposition}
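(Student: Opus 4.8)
The plan is to funnel both sides of the equivalence through a single combinatorial condition on the right-hand side $e$ of the rule for $N$. Write $N : A_1 \typear \cdots \typear A_m \typear o$ with rule $N z_1 \cdots z_m \to e$, so that $\Lambda(N) = \lambda z_1 \cdots z_m . e$ has type $\typeof{N}$, its bound variables are exactly the parameters $z_1, \dots, z_m$, and its free occurrences are all of terminals and non-terminals; when typing $\Lambda(N)$ in the safe lambda calculus we treat the latter as constants, i.e.\ we place them in a distinguished zone of the context that is exempt from the order side-conditions of the calculus. The target condition on $e$ is: \emph{(i)} $\typeof{N}$ and every type occurring in $e$ is homogeneous; and \emph{(ii)} for every subterm $t$ of $e$ occurring in operand position and every parameter $z_i$ occurring in $t$, one has $\ord(t)\le\ord(z_i)$. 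I will show that ``the rule for $N$ is safe'' and ``$\Lambda(N)$ is derivable in the safe lambda calculus of \cite{BlumPhd} with all types in the derivation homogeneous'' are each equivalent to \emph{(i)} together with \emph{(ii)}, hence to each other.

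The first equivalence is just unwinding definitions: reading ``safe'' as in Definition~\ref{def:safers}, which presupposes that $N$ --- and therefore each $A_i$ --- has homogeneous type, the rule is safe exactly when no operand-position subterm of $e$ contains a parameter of strictly smaller order, which is precisely \emph{(ii)}; and since the argument and result types of a homogeneous type are again homogeneous, every type occurring in a well-typed applicative term built from symbols of homogeneous type is homogeneous, so \emph{(i)} merely restates the standing homogeneity hypothesis. For the second equivalence I would argue by structural induction on the subterms of $e$, proving the auxiliary claim that, for an applicative term $t : B$ over the constants $\Sigma\union\mathcal{N}$ and variables among $z_1,\dots,z_m$, the judgement $z_1{:}A_1,\dots,z_m{:}A_m \vdash t : B$ is safe-derivable with all types in its derivation homogeneous iff $B$ and every type in $t$ is homogeneous and every operand-position subterm $t'$ of $t$ satisfies $\ord(t')\le\ord(z_i)$ for every parameter $z_i$ in $t'$. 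The base cases (a parameter or a constant at the head with no arguments) use the axiom rules, which carry no order constraint. In the inductive step $t = h\,t_1\cdots t_l$ in spine form, the derivation ends with the safe application rule applied to $t_1,\dots,t_l$; its side-condition says exactly that $\ord(B)$ is bounded by the order of every non-constant variable occurring free in the operands, that is, by $\ord(z_i)$ for every parameter $z_i$ occurring in some $t_j$ --- this is the instance of \emph{(ii)} for $t$ in operand position, the sub-case of a parametric head $h$ being automatic since the order of a function type is never below that of its result, so $\ord(t)\le\ord(h)$; combined with the induction hypotheses on the $t_j$ this gives \emph{(ii)} for all operand-position subterms of $t$, homogeneity of $B$ and of the argument types being tracked alongside. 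Prepending the block abstraction $\lambda z_1\cdots z_m$ via the safe abstraction rule adds no constraint beyond homogeneity of $\typeof{N}$ (its own order side-condition being vacuous once the remaining context consists only of exempt constants), so $\Lambda(N)$ is safe-derivable with homogeneous types iff \emph{(i)} and \emph{(ii)} hold of $e$.

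The crux --- the reason this is a genuine little lemma rather than a tautology --- is matching the two order conditions node for node. Definition~\ref{def:safers} quantifies only over \emph{parameters} and only over subterms in \emph{operand position}, whereas the safe application rule's side-condition speaks of the free variables of the operand terms; the reconciliation rests on the observation that, in the body of a recursion-scheme rule, the free variables of any subterm that are not terminals or non-terminals are precisely the parameters occurring in it, so that treating terminals and non-terminals as constants makes the two conditions coincide verbatim. This is also why the plain safe lambda calculus, which builds in homogeneity globally, is not literally the right comparison: one must carry ``all types in the derivation are homogeneous'' as a separate proviso, and it is exactly this proviso that bridges the gap with Definition~\ref{def:safers}, whose scope is confined to homogeneously-typed schemes. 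A last, purely bureaucratic point is the gap between the purely applicative $e$ and the $\eta$-long forms underlying the computation-tree picture of Section~2: since $\eta$-expanding an operand of type $B$ only inserts fresh variables of order strictly below $\ord(B)$ into argument positions, it preserves both safe-derivability and \emph{(i)}$\wedge$\emph{(ii)}, so one may reason with $e$ directly and appeal to $\eta$-invariance only if needed.
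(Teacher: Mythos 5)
The paper offers no proof of this proposition---it is imported wholesale from \cite[Proposition 3.11]{BlumPhd}---so there is no in-paper argument to measure yours against. Your reconstruction is essentially the intended one: funnel both sides through the single condition that every operand-position subterm of $e$ has order at most that of every parameter it contains, with the symbols of $\Sigma\union\mathcal{N}$ treated as order-exempt constants. The two points you flag as the crux (exempting terminals and non-terminals from the order side-conditions, and carrying ``all types homogeneous'' as a separate proviso rather than baking it into the calculus) are indeed exactly where the equivalence turns, and your observation that the side-condition at the head of a spine is automatic because $\ord(h)\geq\ord(B)$ is the right reason the n-ary application rule lines up with Definition~\ref{def:safers}'s restriction to operand positions.

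There is one bookkeeping slip worth fixing. Your auxiliary invariant for a subterm $t:B$ records only the order constraints on the operand-position subterms \emph{of} $t$, but safe-derivability of $t$ also forces the side-condition of $t$'s own topmost application, namely $\ord(B)\leq\ord(z_i)$ for every parameter $z_i$ free in $t$. As stated the ``iff'' therefore fails at intermediate levels: for $z_1:o\rightarrow o\rightarrow o$ and $z_2:o$, the term $t=z_1\,z_2$ of order $1$ satisfies your right-hand side (its only operand-position subterm is $z_2$, and all types are homogeneous) yet is not safe-derivable. You in fact use the missing clause when you identify the side-condition of $t_j$'s derivation with ``(ii) for $t_j$ in operand position''---but that clause belongs to the invariant of $t_j$, not of its parent. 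The repair is mechanical: add the clause $\ord(B)\leq\ord(z_i)$ for all parameters $z_i$ free in $t$ to the invariant; it is vacuous at the root since $e:o$, and for a proper operand it is precisely condition~(ii) applied to that operand, so the induction closes. Relatedly, since the application rule of \cite{BlumPhd} checks its side-condition against the typing context shared with its premises, you should derive each operand in its minimal context (only its own free parameters) and weaken afterwards; stating the invariant with the full context $z_1{:}A_1,\dots,z_m{:}A_m$, as you do, would otherwise impose one operand's constraints on its siblings.
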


It was shown that incremental-binding characterizes safe simply-typed lambda terms \cite[Proposition 5.11]{BlumPhd}. We show here the equivalent result for recursion schemes. Without loss of generality we will consider recursion schemes \defname{with no dead rule} such that for every production rule, there exists a derivation from the start non-terminal involving that rewrite rule.
\begin{proposition}[Binder characterization for HORS]
\label{prop:horsSafeBinderCharact}
A higher-order recursion scheme with no dead rule is \emph{safe} (in the sense of \cite{KNU02}) if and only if it is homogeneous and incrementally-bound.
\end{proposition}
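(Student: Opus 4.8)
The plan is to lift the lambda-calculus characterization of safety by incremental-binding (\cite[Proposition 5.11]{BlumPhd}) to recursion schemes, using Proposition~\ref{prop:safelambdacorresp} as the bridge between the two settings. Since the definition of \emph{safe recursion scheme} is predicated on the non-terminals having homogeneous types --- which is exactly what ``homogeneous'' means --- it suffices to show, under the standing assumption that $G$ is homogeneous, that \emph{every rewrite rule of $G$ is safe if and only if $G$ is incrementally-bound}.

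The heart of the argument is a tiling lemma: \emph{for a recursion scheme $G$ with no dead rule, $G$ is incrementally-bound if and only if $\tau(\Lambda(N))$ is incrementally-bound for every non-terminal $N$.} First I would observe that the computation tree of $G$, and likewise every finite unfolding of it, is obtained from $\tau(\Lambda(S))$ by repeatedly substituting, for each node labelled by a non-terminal $N$ (necessarily applied to its $\arity(N)$ arguments in an $\eta$-long form), a fresh copy of the tree $\tau(\Lambda(N))$ plugged in under a fresh application node, itself under a dummy lambda --- this being exactly the shape dictated by $\eta$-long normalization, so that the substitution triggers no further $\eta$-expansion. The crucial point is locality of binders: every variable node occurring in such a copy --- either a parameter of $N$, bound by the root lambda of the copy, or a variable created by an $\eta$-expansion carried out inside the right-hand side of $N$, bound by the corresponding $\eta$-lambda inside the copy --- has its binder inside that copy, and all lambda nodes on the path from it up to that binder lie inside the copy as well; hence whether it satisfies the incremental-binding condition is determined by that copy alone. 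Since incremental-binding is tested on finite unfoldings and is inherited under pruning, and the no-dead-rule hypothesis ensures a full copy of each $\tau(\Lambda(N))$ occurs in some finite unfolding, the equivalence follows. (Throughout, I use the standard convention that terminal nodes --- and, before unfolding, non-terminal nodes --- are not subject to the incremental-binding condition, consistently with the way context variables are treated in the safe lambda calculus.)

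To conclude, assume $G$ is homogeneous. Because homogeneity of simple types is hereditary and the type of every subterm of the $\eta$-long normal form of $\Lambda(N)$ is an argument type of a terminal, of a non-terminal, or of a parameter of $N$, or a component of such a type, every type occurring in a typing derivation of $\Lambda(N)$ is homogeneous. Hence by Proposition~\ref{prop:safelambdacorresp} the rule for $N$ is safe iff $\Lambda(N)$ is derivable in the safe lambda calculus, which by \cite[Proposition 5.11]{BlumPhd} holds iff $\tau(\Lambda(N))$ is incrementally-bound. Quantifying over $N$ and applying the tiling lemma: $G$ is safe iff $G$ is homogeneous and every rule of $G$ is safe iff $G$ is homogeneous and every $\tau(\Lambda(N))$ is incrementally-bound iff $G$ is homogeneous and incrementally-bound.

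The step I expect to be the main obstacle is the tiling lemma, and within it the locality-of-binders verification: one has to make precise how the capture-permitting substitution of the $\Lambda(N)$'s interacts with $\eta$-long normalization and confirm that no lambda node introduced by an $\eta$-expansion --- nor any of the application and dummy-lambda nodes inserted between tiles --- can lie strictly between a variable node and its binder unless it already does so within a single tile. Equivalently, one must check that every failure of incremental-binding is localized to one $\tau(\Lambda(N))$; this is essentially the recursion-scheme counterpart of the argument behind \cite[Proposition 5.11]{BlumPhd}, which relates a subterm sitting in operand position and containing a parameter of strictly smaller order to an $\eta$-introduced lambda of strictly larger order placed above that parameter. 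The remaining points --- the hereditariness of homogeneity and the matching of the open-term conventions for ``incrementally-bound'' across the two settings --- are routine.
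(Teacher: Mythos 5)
Your argument is sound in outline and reaches the right conclusion, but it routes through a different key lemma than the paper. The paper also reduces to \cite[Proposition 5.11]{BlumPhd} via Proposition~\ref{prop:safelambdacorresp}, but it works with the iterated unfoldings of $\Lambda(S)$ as \emph{whole} lambda terms: for $(\Rightarrow)$ it invokes the Substitution Lemma \cite[3.19]{BlumPhd} to show that every unfolding remains a safe homogeneous term and then applies 5.11(i) to each unfolding; for $(\Leftarrow)$ it uses the no-dead-rule hypothesis to exhibit the unsafe $\Lambda(N)$ as a subterm of some finite unfolding and concludes by 5.11(ii) (a term containing an unsafe subterm has a non-incrementally-bound computation tree). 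You instead work rule by rule and prove a structural ``tiling'' lemma directly on computation trees: $G$ is incrementally-bound iff each $\tau(\Lambda(N))$ is. Your locality-of-binders observation is the right reason this works --- unfolding replaces a non-terminal node $M\langle t_1,\ldots,t_m\rangle$ by $@\langle \tau(\Lambda(M)), t_1,\ldots,t_m\rangle$, grafting the new tile as a \emph{sibling} of the existing subtrees, so no lambda node is ever inserted on an existing variable-to-binder path and every violation of incremental binding is confined to a single tile --- and it is in effect the computation-tree analogue of what the Substitution Lemma and 5.11(ii) jointly deliver. The paper's route buys brevity, since both non-trivial ingredients are already proved in the thesis; your route buys self-containedness and the sharper statement that incremental binding is a per-rule property, at the cost of actually carrying out the tiling lemma, which you correctly identify as where all the work lives. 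Two points to reconcile before calling the proof complete: (a) your convention that terminal and not-yet-unfolded non-terminal nodes are exempt from the incremental-binding test differs from the paper's stated convention of closing off free variables with outer lambdas, so you must check the two notions agree on the trees in question; (b) your remark that homogeneity of $G$'s declared types propagates to every type occurring in the derivation of $\Lambda(N)$ is genuinely needed to invoke Proposition~\ref{prop:safelambdacorresp}, and is left implicit in the paper's proof as well.
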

\begin{proof}
The proof reduces to the setting of the lambda calculus where a similar result was shown for finite lambda terms \cite{BlumPhd}.
Take a safe recursion scheme $R = \langle \Sigma, \mathcal{N}, \mathcal{R}, S \rangle$.

$(\Longrightarrow)$ Observe that all the iterated unfoldings of $\Lambda(S)$ are safe homogeneous simply-typed terms. It's true of $\Lambda(S)$ itself by Proposition \ref{prop:safelambdacorresp}. And by the Substitution Lemma \cite[3.19]{BlumPhd}, it is also true of each subsequent unfolding. Consequently, by the characterization result of the safe-simply typed lambda calculus \cite[Proposition 5.11]{BlumPhd}(i), every repeated unfolding of $\Lambda(S)$ has an incrementally-bound computation tree. Hence $R$ is incrementally-bound.

$(\Longleftarrow)$ Assume that $R$ is not safe. Suppose that it's homogeneously-typed, then the syntactic constraint of safety is not met for some production rule $N x_1 \cdots x_n \rightarrow e \in \mathcal{R}$ and non-terminal $N \in \mathcal{N}$. Since the scheme has no dead-rule, after performing $|\mathcal{N}|$ unfoldings of $\Lambda(S)$ the non-terminal $N$ must have been substituted at least once by $\Lambda(N) = \lambda x_1 \cdots x_n. e$, which is unsafe by Proposition \ref{prop:safelambdacorresp}. Hence after a finite number of unfoldings of $S$ we obtain a term $t'$ containing the unsafe subterm $\Lambda(N)$.
Thus, by \cite[Proposition 5.11]{BlumPhd}(ii), the computation tree $t'$ is not incrementally-bound.
\end{proof}

\subsection{Equi-expressivity results}

Now that we have defined the notions of tree-generating higher-order recursion schemes and
tree-generating higher-order automata we can state two important results about their relative expressivity:

\begin{theorem}[Safe HORS - PDA equi-expressivity \cite{KNU02}]
A $\Sigma$-labelled tree is generated by a \emph{safe} order-$n$ recursion scheme if and only if it is accepted by an order-$n$ pushdown automaton.
\end{theorem}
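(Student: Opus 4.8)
The plan is to derive the theorem from the two HMOS transformations between recursion schemes and collapsible pushdown automata \cite{hmos-lics08}, together with the binder characterisation of safety (Proposition~\ref{prop:horsSafeBinderCharact}); both directions hinge on the observation that \emph{safety makes the links created by the HMOS automaton redundant}. This is precisely the route that the remainder of the paper develops in the more general, not-necessarily-homogeneous setting, so here I only sketch the argument. Routine bookkeeping — that the HMOS procedures preserve the order of the device in both directions — is immediate from the constructions and I omit it.

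\textbf{From schemes to automata.} Let $G$ be a safe order-$n$ recursion scheme and let $\mathcal{A}$ be the order-$n$ CPDA produced from $G$ by the HMOS procedure, so that $\mathcal{A}$ and $G$ generate the same $\Sigma$-labelled tree. The key claim is that every link that $\mathcal{A}$ ever creates is \emph{trivial}: a link of order $j$ always has height $1$ and points to the order-$(j-1)$ stack lying immediately below the symbol carrying it. Granting this, each $collapse$ performed by $\mathcal{A}$ coincides with a single $pop_j$, so the links carry no information, and deleting them turns $\mathcal{A}$ into an ordinary order-$n$ pushdown automaton generating the same tree. The triviality claim is proved by induction on the length of a run: in the HMOS construction a symbol pushed while simulating an occurrence of a variable $x$ is equipped with a link to the stack frame created when $x$'s binder was entered, and incremental binding — the syntactic content of safety, by Proposition~\ref{prop:horsSafeBinderCharact} — says that this binder is the \emph{nearest} enclosing lambda node of order strictly greater than $x$, hence the most recently created frame of that order, which lies immediately below. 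I expect this to be the main obstacle: it requires unwinding the HMOS construction and matching its link-assignment rule against the incremental-binding property, and morally it is the very invariant (``stack safety is preserved by the CPDA moves'') that the body of the paper must establish in full.

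\textbf{From automata to schemes.} Conversely, given an order-$n$ pushdown automaton $\mathcal{A}$, regard it as an order-$n$ CPDA in which every non-$\bot$ symbol links to the stack immediately below it and $collapse$ is never invoked (or is invoked only as a $pop$). Running the HMOS procedure in the scheme-producing direction yields an order-$n$ recursion scheme $G$ whose generated tree is the one accepted by $\mathcal{A}$. Because $\mathcal{A}$ has only trivial links, the non-terminals that HMOS produces carry homogeneous types and the computation tree of $G$ is incrementally bound — each variable node is bound by the nearest enclosing lambda node of strictly greater order, mirroring the fact that the simulated stack frames are nested by order. Proposition~\ref{prop:horsSafeBinderCharact} then yields that $G$ is safe, which completes the proof.
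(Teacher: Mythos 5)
This theorem is quoted from \cite{KNU02}; the paper itself gives no proof of it, so there is no in-paper argument to compare against directly. Your route --- deriving it from the two HMOS transformations together with Proposition~\ref{prop:horsSafeBinderCharact} --- is exactly the strategy the paper deploys in Sections~\ref{sec:IncrementallyBoundRSToPDA} and~\ref{sec:PDAtoIncrementallyBoundRS} for its more general theorem (incrementally-bound schemes, homogeneity dropped), of which the present statement is the special case; your automaton-to-scheme direction matches Section~\ref{sec:PDAtoIncrementallyBoundRS} essentially verbatim.

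However, your key claim in the scheme-to-automaton direction is false as stated. It is not true that every link ever created by $CPDA(G)$ has height $1$ and targets the stack immediately below: the operation $push_j$ (Definition~\ref{cpdastackopearations}) copies the top $(j-1)$-stack and applies $(\cdot)^{\langle j\rangle}$ to the copy, which increments the height of every order-$j$ link in it, so reachable configurations do contain links of height $\geq 2$ (this is visible in the displayed equation in the proof of Lemma~\ref{lem:incrk_qsafe}, where a node $\lambda\overline\eta_{b-1}$ acquires height $k$ with $k$ possibly equal to $2$). What is true --- and what the notion of stack safety (Definition~\ref{dfn:safestack}) is engineered to express --- is the weaker, recursive invariant that the links of the lambda nodes occurring in the relevant \emph{incremental order-decomposition} have height $1$, and that this persists after collapsing at any of them. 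Since incremental binding guarantees (Lemma~\ref{lem:binder_in_ordecompos}) that the node at which $collapse$ is actually performed is always a member of that decomposition, this weaker invariant suffices to replace each $collapse$ by a single $pop$ (Lemma~\ref{lem:safecollapsesimulation}, Theorem~\ref{thm:correctness_simulation}). You correctly identify the invariant-preservation as the main obstacle, but the invariant you propose to carry through the induction would be destroyed by the first $push_j$ with $j\geq 2$, so the induction would not close; the repair is precisely the two-parameter ($l$-safe) formulation of Definition~\ref{dfn:safestack} and the accompanying Lemmas~\ref{lem:incrk_qsafe}--\ref{lem:pushj_safe_implies_l-safe}. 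A smaller imprecision: links of prime lambda nodes have order $1$ and point to the immediately preceding symbol, while those of a non-prime lambda node of order $l$ have order $n-l+1$; ``points to the order-$(j-1)$ stack immediately below'' conflates these two cases.
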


\begin{theorem}[HORS - CPDA equi-expressivity \cite{hmos-lics08}]
	\label{thm:HMOS-equiexpr}
	A $\Sigma$-labelled tree is generated by an order-$n$ recursion scheme if and only if it is accepted by an order-$n$ collapsible pushdown automaton.
\end{theorem}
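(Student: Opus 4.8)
The plan is to establish the two directions of the equivalence separately, following \cite{hmos-lics08}. For the direction from automata to schemes, given an order-$n$ CPDA $\mathcal{A} = \langle \Sigma, \Gamma, Q, \delta, q_0\rangle$ I would describe a continuation-passing encoding of its configurations as applicative terms. For each order $k \leq n$ one introduces a family of non-terminals indexed by a state $q \in Q$ and a top-of-stack symbol; the parameters of such a non-terminal encode, in order-decreasing fashion, the order-$(k-1)$ stacks lying below the current $top_k$, each paired with the configuration-to-tree continuation to be run once that sub-stack resurfaces. Under this encoding a $pop_k$-transition becomes an application of the corresponding continuation parameter; a $push_k$-transition duplicates a continuation argument; an output transition $(f, q_1, \ldots, q_{\arity(f)})$ produces the terminal $f$ applied to the continuations obtained by restarting from the states $q_i$; and the $collapse$ of a symbol carrying a $k$-link of height $h$ amounts to invoking the continuation parameter installed $h$ levels earlier at order $k$, which is well-typed precisely because links always point strictly downwards. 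A straightforward induction on the length of runs then shows that the scheme so obtained has order $n$ and generates exactly the tree accepted by $\mathcal{A}$.

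For the converse, which is the harder direction, given an order-$n$ scheme $G$ I would work with its computation tree $\tau(G)$, which, although infinite, is regular and hence finitely presented by $G$ itself. By Ong's path--traversal correspondence \cite{OngLics2006}, the value tree generated by $G$ is recovered from the traversals over $\tau(G)$: along each maximal traversal one reads off, in order, the terminal nodes it visits. The crucial point is that a traversal, despite being a justified sequence carrying back-pointers, admits a faithful representation as an order-$n$ CPDA stack: the relevant $P$-view of the traversal is laid out on the stack, the nesting of justification pointers is mirrored by the nesting of sub-stacks, and a pointer from a variable node to its binding lambda-node is recorded as a link whose order equals the order of that lambda-node. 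One then defines a CPDA that walks $\tau(G)$ one traversal-move at a time --- descending to a child is a $push$, following a justification pointer is a $collapse$, and visiting a terminal node fires the associated $Out$ transition --- and checks that it has order $n$ and generates the value tree of $G$.

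The main obstacle is the stack-encoding invariant underpinning this second construction: one must prove that the a priori unbounded pointer structure of traversals in fact fits the strictly-downward, order-stratified link discipline of higher-order stacks, and that $collapse$ reproduces exactly the effect of jumping back along a justification pointer. The invariant to carry through the induction is that after every move the top-$1$ symbol names the current traversal node while its link points to the stack cell encoding that node's justifier; once this is in place, correctness of the output and the matching of orders follow by bookkeeping.
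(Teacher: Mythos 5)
The paper does not actually prove this theorem --- it is imported wholesale from \cite{hmos-lics08} --- and your sketch outlines precisely the two HMOS constructions that the citation stands for (and that the paper partially recalls later): a continuation-passing encoding of higher-order stacks by non-terminals for the CPDA-to-scheme direction, and a traversal-simulating CPDA whose $top_2$ stack carries the P-view for the scheme-to-CPDA direction. So your approach coincides with the one the paper relies on; the only caveats are that the hard content (the traversal--stack invariant, and the fact that the collapse continuation must be explicitly threaded through as the extra $\overline\Phi$ parameters rather than ``reached back'' through earlier stack levels, and that a non-prime lambda node's link has order $n-\ord{\lambda\overline{\xi}}+1$ rather than $\ord{\lambda\overline{\xi}}$) is only named, not carried out --- which is equally true of the paper itself.
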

The proof of the HORS-CPDA equi-expressivity result is constructive in both direction: given a higher-order recursion scheme $R$, one can construct a collapsible pushdown automaton denoted $CPDA(G)$ that recognizes the same tree; and given a collapsible pushdown automaton $A$ one can construct a higher-order recursion scheme recognizing the same tree. In what follows, we will refer to them as the HMOS constructions.
\\

The following result summarizes the contribution of the present paper:
\begin{theorem}[Homogeneity is not a restiction]
	A $\Sigma$-labelled tree is generated by an \emph{incrementally-bound} order-$n$ recursion scheme if and only if it is accepted by an order-$n$ pushdown automaton.
\end{theorem}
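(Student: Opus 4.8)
The plan is to use the HMOS constructions of Theorem~\ref{thm:HMOS-equiexpr} in both directions, the point being that incremental binding corresponds, on the automaton side, to the links of the collapsible pushdown automaton being \emph{redundant}, so that the collapsible automaton degenerates to an ordinary one. The converse direction of the theorem is in fact immediate: by Knapik \emph{et al.}'s theorem an order-$n$ pushdown automaton is equivalent to some \emph{safe} order-$n$ recursion scheme, which is incrementally-bound (and homogeneous) by Proposition~\ref{prop:horsSafeBinderCharact}; alternatively, and more constructively in the spirit of this paper, one checks directly that when $A$ is a PDA — viewed as a (trivial) CPDA with vacuous links and no $collapse$ — the recursion scheme $RS(A)$ produced by the HMOS CPDA-to-scheme construction is incrementally-bound and, one verifies, homogeneous, because the absence of genuine links forces every non-terminal of $RS(A)$ to abstract only over the stack components sitting immediately below the one it represents, so that in the computation tree each variable node is bound by the nearest strictly-higher-order lambda above it.

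The substance of the theorem is the forward direction. Given an incrementally-bound order-$n$ recursion scheme $G$, I would take the order-$n$ collapsible pushdown automaton $CPDA(G)$ from the HMOS construction (which is defined for arbitrary $G$, safe or not) and isolate a structural invariant on CPDA configurations — call it \emph{stack safety} — expressing that the link carried by each order-$1$ symbol, and hence the target of every possible $collapse$, is a function of the bare shape of the stack alone (e.g.\ always a $top$-stack of the appropriate order), so that a linkless machine can recompute it. Two lemmas then suffice. \textbf{Lemma A}: if $G$ is incrementally-bound then every configuration reachable by $CPDA(G)$ is stack-safe; this is proved by induction on the length of the run, verifying that each HMOS stack operation — $push_1^{b,i}$, $push_j$ (with its $^{\langle j\rangle}$ re-indexing of links), $pop_j$, and $collapse$ — maps stack-safe configurations to stack-safe configurations, the incremental-binding hypothesis being used precisely to pin down the order $i$ of each newly created link to the value that stack safety predicts from the current $top$ structure. \textbf{Lemma B}: a stack-safe order-$n$ CPDA is step-for-step simulated by an order-$n$ PDA obtained by erasing all link annotations from the alphabet and replacing each $collapse$ by the iterated $pop_j$ that the current (linkless) stack determines; since the two machines emit the same terminals in the same order, they generate the same $\Sigma$-labelled tree. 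Chaining these — $G$, then $CPDA(G)$, then its stack-safe degradation, then an order-$n$ PDA — gives the forward implication.

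Composing the two directions yields the announced strengthening of Knapik \emph{et al.}'s result: starting from any incrementally-bound order-$n$ recursion scheme $G$, pass to an equivalent order-$n$ PDA $A$ by the forward direction and then back to a recursion scheme $RS(A)$ by the converse direction; $RS(A)$ generates the same tree as $G$ and is incrementally-bound and homogeneous, hence \emph{safe} in the sense of \cite{KNU02} by Proposition~\ref{prop:horsSafeBinderCharact}. Thus every incrementally-bound scheme has an equivalent safe one and type-homogeneity is not a proper restriction.

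The main obstacle is Lemma A, and specifically choosing the formulation of stack safety correctly: it must be weak enough to hold at the initial configuration of $CPDA(G)$ as a consequence of incremental binding, yet strong enough to make $collapse$ targets stack-determined (so that Lemma B goes through), and — the delicate requirement — it must be genuinely invariant under \emph{all} the HMOS operations, including $push_j$ for non-maximal $j$ and the reconstitution of the invariant after a $pop$. A naive formulation of the invariant fails to be preserved; as recorded in the historical remark above, the inductive proof only closes once the definition of stack safety is adjusted to absorb the interaction between the HMOS handling of variables of non-maximal order and the incremental-binding discipline. Everything else (the simulation of Lemma B, the bookkeeping for $RS(A)$, and the composition) is routine once Lemma A is in place.
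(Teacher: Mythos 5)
Your proposal follows essentially the same route as the paper: the forward direction via $CPDA(G)$, a ``stack safety'' invariant preserved along reachable configurations (with incremental binding used to place each variable's binder in the right position so that every $collapse$ reduces to a single $pop$ of the link's order), and a linkless PDA simulation; the converse via the HMOS CPDA-to-scheme construction applied to a PDA with vacuous links, checked to be homogeneous and syntactically safe, hence incrementally-bound by Proposition~\ref{prop:horsSafeBinderCharact}. You also correctly identify that the only delicate point is the precise formulation of the stack-safety invariant (the paper states it in terms of the incremental order-decomposition, requiring all link heights there to equal $1$, recursively under $collapse$), so the outline matches the paper's proof.
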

Section \ref{sec:IncrementallyBoundRSToPDA} gives a constructive proof of the implication by showing that the collapsible pushdown automaton $CPDA(G)$ from the HMOS construction can be turned into an equivalent pushdown automaton (Theorem \ref{thm:correctness_simulation}).
Section \ref{sec:PDAtoIncrementallyBoundRS} proves the opposite direction.

\section{From recursion scheme to collapsible pushdown automaton}
In this section we recall some of the concepts from the HMOS constructions from Theorem \ref{thm:HMOS-equiexpr}. Familiarity with the original construction can be helpful, in particular for the concept of traversals. We refer the reader to \cite{hmos-lics08} for a more detailed introduction to those concepts.

We fix a higher-order recursion scheme $G = \langle \Sigma, \mathcal{N}, \mathcal{R}, S \rangle$ of order $n$. Let $\Nodes$ denotes the set of nodes of the computation tree of $G$; $\Nodes_@$ denotes the set of application nodes; $\Nodes_\lambda$ the set of lambda nodes; $\Nodes_\lambda^{\sf prime}$ the set of lambda nodes that are the first child of some $@$ nodes (also known as \emph{prime} lambda nodes); and $\Nodes_{\sf var}$ denotes the set of variables nodes.

\paragraph{Presentation}

\begin{definition}[CPDA of $G$]
We consider the order $n$ collapsible pushdown automaton, obtained from $G$ by the HMOS transformation defined in \cite{hmos-lics08} and denoted $CPDA(G)$ \cite[Definition 5.2]{hague-collaps-full}. Recall that:
\begin{itemize}
	\item The ranked-alphabet is $\Sigma$ (the alphabet of $G$);
	\item The stack-alphabet $\Gamma$ of the automaton is taken as the set of nodes of the computation \emph{graph} of $G$. Alternatively, this graph can be viewed as the computation tree of $\Lambda(S)$ by replacing back-pointers in the computation graph with pointers to non-terminals nodes in the computation tree. Thus $\Gamma = \Nodes$;
	\item The transition function $\delta$ of the automaton is shown in Figure \ref{fig:cpdaprime} using a more concise definition than the original one
	from \cite{hague-collaps-full};
	\item The initial configuration is defined as $c_0 = push_1\; \lambda\; \bot_n$
	where $\lambda$ refers to the root (dummy) lambda node of the computation graph of $G$.	
\end{itemize} 
\end{definition}
The automaton is well-defined in the sense that no collapse can occur at an element whose link is undefined: In particular $collapse$ never occurs at non-lambda nodes. It is equivalent to $G$ in the sense that they both generate the same tree \cite{hmos-lics08}.
The idea is that automaton $CPDA(G)$ proceeds by inductively computing a set of traversals over the computation tree of $G$. The traversals are shown to correctly evaluates the production rules of the recursion scheme $G$, therefore the constructed automaton correctly accepts the same tree as the recursion scheme. 

(At this point, readers not familiar with the concept of traversals over the computation graph of a recursion scheme, O-view and P-view may want to lookup their definitions in \cite{OngLics2006}.)

Observe that one can easily modify $CPDA(G)$ into an automaton that ``prints out'' the traversal that is being computed. This can be done by changing the behaviour of the $push_1$ operation to make it print out the input element before pushing it on the stack. The justification pointers can then be recovered inductively using the node labels: For a variable node, it is the only node-occurrence that binds it in the P-view at that point (which is computable by the induction hypothesis); prime lambda nodes always point to their immediate predecessor; and a non-prime lambda node $\lambda \overline{x}$ is always justified by the predecessor of the justifier of the variable node preceding it (written $\ip(\jp(t))$ where $t$ is the traversal ending with $\lambda \overline{x}$).

\begin{remark} 
Our presentation of $\delta$ differs slightly from the original one: In the case (A), when pushing the prime child of an application node $@$ on the stack, we assign it a link pointing to the preceding stack symbol in the top $1$-stack ({\it i.e.,} the $@$-node itself).
This modification avoids the case analysis on the value of $j$---the child-index of $u$'s binder---in the
cases ($V_0$) and ($V_1$), and the sequence of instructions $pop_1^{p+1}$ can just be replaced by
$pop_1^p ; collapse$. 
\end{remark}

The stack of the current configuration alone does not suffice to reconstruct the traversal that is being computed due to the use of the ``destructive'' operations $collapse$ in the CPDA. Nevertheless, two important pieces of information are recoverable from the configuration-stack: the O-view and the P-view of the traversal. 
\begin{proposition}[Recovering views from a configuration of $CPDA(G)$]
Let $c$ be a configuration of $CPDA(G)$. The \defname{long O-view},
\defname{O-view} and \defname{P-view} of the traversal currently being simulated by the configuration $c$, written respectively $\longoview{c}$, $\oview{c}$ and $\pview{c}$, can be retrieved using the following stack operations:
\begin{compactitem}
\item Long O-view:
\begin{align*}
&  \longoview{s} = \\
 & \quad \left\{
  \begin{array}{ll}
      \epsilon & \mbox{if $top_1 s$ is undefined;} \\
      \longoview{pop_1 s} \cdot top_1 s & \parbox[t]{7cm}{if $top_1\,s \in \Nodes_{\sf var}$, $top_1 s$ pointing to its immediate predecesor;} \\
      \longoview{pop_1 s} \cdot top_1 s & \mbox{if $top_1\,s \in \Nodes_@$, $@$ having no pointer;} \\
      \longoview{collapse\,s} \cdot top_1 s & \parbox[t]{7cm}{if $top_1\,s \in \Nodes_\lambda^{\sf prime}$, $top_1 s$ pointing to its immediate predecesor;} \\
      \longoview{collapse\,s} \cdot top_1 s & \parbox[t]{7cm}{if $top_1 s \in \Nodes_\lambda\setminus \Nodes_\lambda^{\sf prime}$, $top_1 s$ pointing to $\ip(\jp(s))$.}
    \end{array}
      \right.
\end{align*}
\item The O-view is defined similarly to the long-O-view except that the calculation stops when an @-node is reached:
\begin{align*}
  \oview{s}  &=   top_1 s & \mbox{if $top_1 s \in \Nodes_@$, $@$ having no pointer;}
\end{align*}
\item As shown in the HMOS construction, the P-view $\pview{s}$ is given by $top_2\,c$ \cite{hague-collaps-full}.
\end{compactitem}
\end{proposition}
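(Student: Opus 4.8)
The plan is to prove the proposition by induction on the length of the run of $CPDA(G)$ that produces $c$, or equivalently on the length of the traversal $t$ currently being simulated by $c$ (the correspondence between configurations and traversals being the one established in the HMOS construction). The P-view claim I would dispose of first by appeal to that construction: it is shown in \cite{hmos-lics08,hague-collaps-full} that $top_2$ of the configuration stack always carries the P-view of the simulated traversal, so the only thing to verify is that the more concise presentation of $\delta$ in Figure \ref{fig:cpdaprime} produces exactly the same sequence of configurations as the original one — the difference being only the link assigned to the prime child in case (A) and the consequent replacement of $pop_1^{p+1}$ by $pop_1^p\,;collapse$ in cases $(V_0)$ and $(V_1)$ — after which the invariant transfers verbatim.

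For the long-O-view, the base case is the initial configuration $c_0 = push_1\,\lambda\,\bot_n$: here $top_1\,c_0$ is the root dummy lambda node, and one pop (or collapse) exposes $\bot_n$, on which $top_1$ is undefined, so the formula returns the one-element sequence consisting of the root lambda node — which is indeed $\longoview{}$ of the length-one traversal. For the inductive step I would run through the clauses of $\delta$. Every non-output transition extends the simulated traversal by exactly one computation-tree node $u$ and performs a $push_1$ of $u$, possibly preceded by a sequence of $pop_1$'s and a $collapse$; the recursive clauses of the long-O-view formula are designed precisely to undo that last move. Concretely, the two equalities to establish are: (i) when $u$ is a variable node (pointing to its predecessor in the P-view) or a pointerless $@$-node, one has $\longoview{t\cdot u} = \longoview{t}\cdot u$, and $pop_1$ applied to the new stack recovers exactly the stack for $t$, so the induction hypothesis applies; and (ii) when $u$ is a prime or non-prime lambda node justified by the node occurrence $m$, one has $\longoview{t\cdot u} = \longoview{t_{\prefixof m}}\cdot u$ by the defining clauses for O-views of traversals over computation trees, and $collapse$ applied to the new stack must be shown to recover precisely the stack configuration representing $t_{\prefixof m}$.

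Establishing (ii) is the main obstacle. It reduces to showing that the collapse link attached to a lambda node on the stack always points to the stack prefix representing the correct sub-traversal: for a prime lambda node the link is to its immediate predecessor (the $@$-node), matching the O-view rule "take the O-view up to the $@$ and append the prime lambda"; for a non-prime lambda node $\lambda\overline{x}$ the link is to $\ip(\jp(s))$, matching the rule that such a node inherits the O-view of the predecessor of the justifier of the variable node immediately preceding it. To pin this down I would carry a strengthened invariant through the induction — namely that for every order-$1$ element $a$ occurring in the configuration stack, the stack prefix obtained by collapsing at $a$ represents the sub-traversal $t_{\prefixof m}$, where $m$ is the occurrence in $t$ determined by the node carried by $a$ together with the justifier recorded in $a$'s link. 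This invariant is essentially part of the HMOS correctness argument, but because the present paper relocates the $@$-link (case (A)), it must be re-verified in the modified setup. Once it is in place, clause (ii) follows, completing the long-O-view case; the O-view formula is then immediate, since truncating the recursion at the first pointerless $@$-node exactly mirrors the definition of the O-view of a traversal, where an $@$-node with no pointer acts as a new starting point and the O-view does not descend past it.
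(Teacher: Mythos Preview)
Your proposal is correct and is, in essence, a fleshed-out version of what the paper does: the paper's own proof is a single sentence stating that the result ``follows from the inductive definition of traversals of $CPDA(G)$, $P$-views and $O$-views'' with a citation to \cite{hague-collaps-full}. You have unpacked precisely what that deferral entails --- the induction on the run, the case analysis on $\delta$, and the collapse invariant for lambda nodes inherited from the HMOS correctness argument --- so the two arguments are the same in substance, yours simply being the explicit form.
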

\begin{proof}
	This follows from the inductive definition of traversals of $CPDA(G)$, $P$-views and $O$-views \cite{hague-collaps-full}.
\end{proof}

\begin{figure} 
\begin{center}
\makebox{
\begin{shadowbox}[10cm]
If $u$'s label is not a variables, the action is just a $push_1^v$, where $v$ is an appropriate child of the node $u$. Precisely:
\begin{itemize}
\item $(A)$ If the label is an @ then $\delta(u) = push_1^{E_0(u),{\bf 1}}$.
\item $(S)$ If the label is a $\Sigma$-symbol $f$ then $\delta(u) = push_1^{E_i(u)}$, where $1 \leq i \leq ar(f)$ is the direction requested by the Environment, or Opponent.

\item $(L)$ If the label is a lambda then $\delta(u) = push_1^{E_1(u)}$.
\end{itemize}
Suppose $u$ is a variable which is the $i$-parameter of
its binder and let $p$ be the span of $u$.
\begin{itemize}
\item $(V_1)$ If the variable has order $l\geq 1$, then
$$\delta(u) = push_{n-l+1} ; pop_1^p ; collapse;push_1^{E_i(top_1), n-l+1}$$
\item $(V_0)$ If the variable is of ground type then
$$\delta(u) = pop_1^p ; collapse;push_1^{E_i(top_1)}$$
\end{itemize}
\caption{The transition rules of $CPDA(G)$.}
\label{fig:cpdaprime}
\end{shadowbox}
}
\end{center}
\end{figure}

\paragraph{Reachable configurations}
We recall the notion of reachability with respect to the $\rightarrow$-step relation as defined in \cite{hague-collaps-full}: for two configurations $c$ and $c'$ we have $c\rightarrow c'$ just if $c' = \delta(top_1 c)(c)$ where $\delta$ is the transition function of the $CPDA(G)$.
In other words, a configuration is \defname{$\rightarrow$-reachable} if it can be attained starting from the initial configuration $c_0$ by performing
one or more applications of the steps (A), (S), (L), $(V_1)$, $(V_0)$ from the algorithm defining $CPDA(G)$.

The intermediate configurations visited by the internal transitions within a step are therefore not $\rightarrow$-reachable.
A configuration is said to be \defname{reachable} if it is $\rightarrow$-reachable or if it is
an intermediate configuration computed during a $\rightarrow$-step
({\it i.e.,} if it can be written $(op_1;\ldots;op_k)(c)$ where $c$ is $\rightarrow$-reachable and
$op_1, \ldots, op_k$ are the first $k$ instructions of some $\rightarrow$-step).

\paragraph{Link convention}
Observe that in $CPDA(G)$, when we push a lambda node $\lambda \overline{\xi}$ on the stack, the associated link has order $1$ if it is a prime lambda node (case (A)), and $n-\ord{\lambda \overline{\xi}}+1$ otherwise (case ($V_1$)). Hence, since no CPDA instruction can change the link order of an element pushed on the stack, at every stage during the execution of the CPDA the link order can be recovered from the (order of the) node itself.

From now on we will only work with stacks occurring as sub-stack of reachable configurations of $CPDA(G)$
therefore we will omit the order-component altogether when representing stack symbols: we write $\lambda\overline\xi^{k}$ to mean  $\lambda\overline\xi^{(1,k)}$ if $\lambda\overline\xi$ is a prime lambda node
and $\lambda\overline\xi^{(n-\ord{\lambda \overline{\xi}}+1,k)}$ otherwise.

\section{From incrementally-bound recursion schemes to pushdown automata}
\label{sec:IncrementallyBoundRSToPDA}
We now fix an {\bf incrementally-bound} higher-order recursion scheme $G$ of order $n$.
We first give a detailed analysis of $CPDA(G)$ and then show how to derive an equivalent (non-collapsible) order-$n$ pushdown automaton.

\subsection{Incremental order-decomposition}
\paragraph{Observation}
Let $s$ be a 1-stack. For any $l\in \nat$, $s$ can then be written
$$ s = u_{r+1} \cdot \lambda \overline{\eta}_r^{k_r} \cdot u_r \cdot
\ldots \cdot \lambda \overline{\eta}_1^{k_1} \cdot  u_1 $$
where
\begin{itemize}
\item  $\lambda \overline{\eta}_1^{k_1}$ is the
last $\lambda$-node in $s$ with order strictly greater than $l$;

\item for $1 < l \leq r$, $\lambda
\overline{\eta}_l^{k_l}$ is the last $\lambda$-node in $s_{\prefixof
\lambda \overline{\eta}_{l-1}^{k_{l-1}}}$ with order strictly
greater than $\ord{\lambda \overline{\eta}_{l-1}^{k_{l-1}}}$,

\item  $r$ is defined as the smallest number such that
$s_{\prefixof \lambda \overline{\eta}_{r}^{k_{r}}}$ does not contain
any lambda node of order strictly greater than $\lambda
\overline{\eta}_{r}^{k_{r}}$.
\end{itemize}

\noindent In other words:
\begin{itemize}
\item for $1 \leq k \leq r$, all the lambda nodes occurring in $u_l$ have order
strictly smaller than $\ord{\lambda \overline{\eta}_l}$;
\item for $1\leq l<l'\leq r$ we have $\ord{\lambda \overline{\eta}_l^{k_l}}
< \ord{\lambda \overline{\eta}_{l'}^{k_{l'}}}$;
\item $r=0$ if and only if all the lambda node in $s$ have order $\geq l$.
\end{itemize}

The subsequence $\lambda \overline{\eta}_r^{k_r} \ldots \lambda\overline{\eta}_1^{k_1}$ of $s$ consisting of the lambda nodes $\lambda
\overline{\eta}_l^{k_l}$ defined above is called the \defname{incremental order-decomposition of the $1$-stack} $s$ {\bf with respect to} $l\in\nat$.
This sequence is uniquely determined for any given $l\in \nat$.

\smallskip

We now generalize this notion to higher-order stacks.
\begin{definition}
The \defname{incremental order-decomposition of a higher-order stack} $s$ {\bf with respect to $l\in \nat$} (or
order-decomposition for short), written $\orddec_l(s)$, is defined as the order-decomposition of the top order-$1$ stack (defined using convention \ref{conv:top_preserve_links}). Equivalently it can be defined as follows:
\begin{align*}
  \orddec_l(\epsilon) &= \epsilon \\
    \mbox{for } s\neq\epsilon\quad \orddec_l(s) &=     \orddec_{\ord{\lambda\overline{\eta}}}(s_{<\lambda\overline{\eta}}) \cdot \lambda\overline{\eta}^k \\
& \mbox{\quad where } \lambda\overline{\eta}^k \mbox{ is the last node in $top_2\,s$ with order $>l$.}
\end{align*}
The \defname{incremental order-decomposition} of $s$, written
$\orddec(s)$, is defined as:
$$
\orddec(s) = \orddec_0(s) \ .
$$
\end{definition}

It follows immediately from the definition that:
\begin{equation}
l<l' \implies \orddec_{l'}(s) \prefixof \orddec_{l}(s)
\label{eqn_safe_prefix}\end{equation}
where $\prefixof$ denotes the sequence-prefix ordering.

\begin{lemma}
\label{lem:push1pop1_orderdecompo} Let $s$ be a (possibly higher-order) stack
such that $top_1\,s \in \Nodes_@ \union \Nodes_{var}$ and $l\geq 0$.
\begin{enumerate}[(i)]
\item Suppose that $\orddec_l(s) = \langle \lambda
\overline{\eta}_r^{k_r}, \ldots, \lambda \overline{\eta}_1^{k_1}
\rangle$ then for any lambda node $a \in \Gamma$ and link $(j,k)$ we have
 \begin{align*} &\orddec_l(push_1 a^{(j,k)}~s) = \\
&                 \qquad \left\{
                                       \begin{array}{ll}
\orddec_l(s), &\hbox{if $\ord{a}\leq l$;} \\
                                        \langle a^k \rangle, &  \hbox{if $\ord{a} \geq \ord{\lambda \overline{\eta}_r}$}; \\
                                         \langle \lambda \overline{\eta}_r^{k_r}, \ldots, \lambda
\overline{\eta}_i^{k_i}, a^k \rangle,
&                                        \mbox{otherwise,} \\
&                              i = \min \{ i \ \in \{1..r\}| \, \ord{a} <
\ord{\lambda \overline{\eta}_i} \}\ .
\\
                                       \end{array}
                                     \right.
\end{align*}
\item For any non-lambda node $a \in \Gamma$ and link $(j,k)$ we have
$$ \orddec_l(push_1\,a^{(j,k)}\,s) = \orddec_l(s) \ .$$

\item If $top_1 s$ is not a lambda node then
$$ \orddec_l(pop_1\,s) = \orddec_l(s) \ .$$
\end{enumerate}
\end{lemma}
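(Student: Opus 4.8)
The plan is to reduce all three parts to an elementary combinatorial statement about order-$1$ stacks. By definition $\orddec_l(s)=\orddec_l(top_2\,s)$, and---thanks to Convention~\ref{conv:top_preserve_links}, which keeps link annotations intact---we have $top_2(push_1\,a^{(j,k)}\,s)=(top_2\,s)\cdot a^{(j,k)}$ while $top_2(pop_1\,s)$ is $top_2\,s$ with its top symbol deleted. Hence it suffices to compute $\orddec_l(t\cdot a^{(j,k)})$ from $\orddec_l(t)$ for an order-$1$ stack $t$, which I would do using the recursive clause for $\orddec_l$ together with the Observation characterising the selected subsequence $\lambda\overline{\eta}_r^{k_r},\ldots,\lambda\overline{\eta}_1^{k_1}$: its orders strictly increase read from the left, every lambda node of $t$ lying strictly above $\lambda\overline{\eta}_j$ has order $\le\ord{\lambda\overline{\eta}_{j-1}}$ (with the convention $\ord{\lambda\overline{\eta}_0}:=l$), and every lambda node of $t$ lying weakly below $\lambda\overline{\eta}_r$ has order $\le\ord{\lambda\overline{\eta}_r}$.

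The one auxiliary fact I would establish first is a sharpening of~\eqref{eqn_safe_prefix}: if $\orddec_l(t)=\langle\lambda\overline{\eta}_r^{k_r},\ldots,\lambda\overline{\eta}_1^{k_1}\rangle$ and $l'>l$, then $\orddec_{l'}(t)$ is the initial segment $\langle\lambda\overline{\eta}_r^{k_r},\ldots,\lambda\overline{\eta}_i^{k_i}\rangle$ with $i=\min\{\,j\in\{1..r\}\mid\ord{\lambda\overline{\eta}_j}>l'\,\}$, and $\orddec_{l'}(t)=\epsilon$ if no such $j$ exists (in particular if $r=0$). This is read off the Observation: the order bounds above show that $\lambda\overline{\eta}_i$ is the topmost lambda node of $t$ of order $>l'$, hence the first node picked by $\orddec_{l'}$, after which unfolding the recursive clause below $\lambda\overline{\eta}_i$ reproduces $\lambda\overline{\eta}_{i+1}^{k_{i+1}},\ldots,\lambda\overline{\eta}_r^{k_r}$ exactly as in $\orddec_l(t)$; the reverse inclusion holds because every node of $\orddec_{l'}(t)$ has order $>l'$ and $\orddec_{l'}(t)$ is a prefix of $\orddec_l(t)$ by~\eqref{eqn_safe_prefix}.

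Parts (ii) and (iii) are then immediate, and I would remark that they do not actually use the hypothesis $top_1\,s\in\Nodes_@\union\Nodes_{var}$. If $a$ is a non-lambda node, the topmost lambda node of $(top_2\,s)\cdot a^{(j,k)}$ of order $>l$ is the same as that of $top_2\,s$ and the stack below it is untouched, so by the recursive clause (or a one-line induction on the length of $top_2\,s$) $\orddec_l(push_1\,a^{(j,k)}\,s)=\orddec_l(s)$; that is (ii). For (iii), if $top_1\,s$ is a non-lambda node then $s$ is recovered from $pop_1\,s$ by pushing that symbol back, and the same computation gives $\orddec_l(pop_1\,s)=\orddec_l(s)$.

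For (i), let $a$ be a lambda node and write $t:=top_2\,s$. If $\ord{a}\le l$ then $a$ is not a lambda node of order $>l$, so it is skipped by the selection and, as above, $\orddec_l(push_1\,a^{(j,k)}\,s)=\orddec_l(s)$. If $\ord{a}>l$ then $a^{(j,k)}$ is the topmost lambda node of $t\cdot a^{(j,k)}$ of order $>l$, so the recursive clause gives $\orddec_l(push_1\,a^{(j,k)}\,s)=\orddec_{\ord{a}}(t)\cdot a^{k}$; feeding $l'=\ord{a}$ into the auxiliary fact evaluates $\orddec_{\ord{a}}(t)$ to $\epsilon$ when $\ord{a}\ge\ord{\lambda\overline{\eta}_r}$ (or $r=0$), giving $\langle a^{k}\rangle$, and to $\langle\lambda\overline{\eta}_r^{k_r},\ldots,\lambda\overline{\eta}_i^{k_i}\rangle$ with $i=\min\{\,j\in\{1..r\}\mid\ord{a}<\ord{\lambda\overline{\eta}_j}\,\}$ otherwise, giving $\langle\lambda\overline{\eta}_r^{k_r},\ldots,\lambda\overline{\eta}_i^{k_i},a^{k}\rangle$. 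These are exactly the three cases of the statement. I expect the auxiliary fact, and within (i) the handling of the threshold boundaries---pinning down the index $i$, keeping the $\ge$-versus-$>$ comparisons exactly as stated, and the degenerate case $r=0$---to be the only place requiring real care; the rest is bookkeeping with the recursive definition of $\orddec_l$ and Convention~\ref{conv:top_preserve_links}.
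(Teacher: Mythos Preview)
Your proposal is correct and follows essentially the same approach as the paper: the paper's own proof is the single line ``Follows immediately from the definition of $\orddec_l(s)$,'' and what you have written is precisely that immediate verification spelled out in full, via the recursive clause for $\orddec_l$ and the characterisation in the Observation. Your auxiliary sharpening of~\eqref{eqn_safe_prefix} and your treatment of the degenerate case $r=0$ are not in the paper, but they are exactly the bookkeeping one would do when unpacking ``immediate from the definition''; there is no genuinely different idea here.
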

\begin{proof}
Follows immediately from the definition of  $\orddec_l(s)$.
\end{proof}

\begin{lemma}[Incremental binders are in the order-decomposition]
\label{lem:binder_in_ordecompos} Let $c$ be a $\rightarrow$-reachable
configuration of $CPDA(G)$ such that $top_1\,c$ is a variable $x$. Then
\begin{enumerate}[(i)]
\item $\orddec(c)$ contains at least a node with order strictly greater than $\ord(x)$;
\item the last lambda node in $\orddec(c)$ satisfying the first condition is precisely $x$'s binder.
\end{enumerate}
In other words, $x$'s binder is the last lambda node in $\orddec_{\ord x}(c)$.
\end{lemma}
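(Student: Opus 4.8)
The plan is to prove the statement by induction on the length of the $\rightarrow$-step derivation reaching $c$, reasoning about how $\orddec_{\ord x}$ and the P-view evolve under the CPDA transitions. The key conceptual fact driving the argument is the \emph{incremental-binding} hypothesis on $G$: in the computation tree, the binder of a variable node $x$ is the first lambda node of order strictly greater than $\ord x$ on the path back to the root, and this path corresponds exactly to the P-view of the traversal ending at $x$. So the real content is to show that the incremental order-decomposition of the configuration faithfully records, among the P-view lambda nodes, exactly the strictly-increasing chain of order ``milestones'' — and that the last such milestone above $\ord x$ is the binder.

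First I would establish the relationship between $\orddec_l(c)$ and the P-view $\pview{c} = top_2\,c$. I expect to show, by a separate induction using Lemma~\ref{lem:push1pop1_orderdecompo} together with the recursive definition of $\orddec$, that $\orddec_l(c)$ is precisely the \emph{incremental order-decomposition of the P-view} with respect to $l$: i.e. it is the maximal strictly-order-increasing subsequence of lambda nodes of $\pview{c}$ obtained greedily from the top, keeping only those nodes whose order exceeds both $l$ and all previously kept nodes' orders. This is because $collapse$, applied at a lambda node, jumps back exactly past the elements that are ``above'' its link target, and by the link convention a lambda node $\lambda\overline\eta$ of order $q$ carries an $(n-q+1)$-link; the destination of that link is, by the HMOS correctness of $CPDA(G)$, the point in the P-view just before $\lambda\overline\eta$'s own enclosing context — which is exactly how the order-decomposition recursion $\orddec_l(s) = \orddec_{\ord{\lambda\overline\eta}}(s_{<\lambda\overline\eta})\cdot\lambda\overline\eta^k$ peels off one milestone at a time.

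Next, with that structural description in hand, I would invoke the incremental-binding property of $G$ directly. Since $top_1\,c = x$ is a variable with binder $\beta = \lambda\overline\xi$ of some order $q > \ord x$, and since in the computation tree $\beta$ is the first lambda of order $> \ord x$ on the path from $x$ to the root, every lambda node strictly between $\beta$ and $x$ on that path has order $\leq \ord x$. In the P-view $\pview{c}$, which realizes this path (possibly with some intervening material for sub-arguments, but the P-view of a variable ends at that variable and its enclosing lambda context is exactly $\beta$ followed by the nodes up to $x$), $\beta$ is therefore the topmost lambda node of order $> \ord x$. By the greedy characterization from the previous step, $\beta$ is kept in $\orddec_{\ord x}(c)$ — it is in fact the \emph{last} (topmost) node of $\orddec_{\ord x}(c)$, since $\orddec_{\ord x}$ keeps only order-$>\ord x$ nodes and $\beta$ is the first such encountered from the top. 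This gives (i), that $\orddec(c)$ (a fortiori $\orddec_{\ord x}(c)$, using \eqref{eqn_safe_prefix}) contains a node of order $> \ord x$, namely $\beta$; and (ii), that the last such node is $\beta$ itself.

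The main obstacle I anticipate is making the first step — the identification of $\orddec_l(c)$ with the greedy order-decomposition of $top_2\,c$ — fully rigorous across \emph{all} CPDA transition cases, particularly the variable cases $(V_0)$ and $(V_1)$ which perform $push_{n-l+1}; pop_1^p; collapse; push_1^{\ldots}$ and thus do more than a simple $push_1$ or $pop_1$. I would need to track carefully that the $pop_1^p; collapse$ portion, by Lemma~\ref{lem:binder_in_ordecompos}'s own claim applied inductively (i.e. that collapse from a variable lands on its binder's predecessor), truncates the P-view exactly at the binder, and that the subsequent $push_1$ of the binder's child $\lambda$-node re-extends it consistently with the $\orddec$ recursion; the order-$(n-l+1)$ preliminary $push_{n-l+1}$ in $(V_1)$ affects only higher-order structure and not $top_2$, so it should be transparent for the P-view. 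A secondary subtlety is confirming that $\orddec_l$ really only sees $top_2\,s$ and is insensitive to the lower-order stack contents, which is exactly what the equivalent definition in the \orddec{} definition asserts, so this should reduce to bookkeeping.
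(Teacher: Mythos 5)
Your proposal is correct and follows essentially the same route as the paper: identify $top_2\,c$ with the P-view of the traversal ending at $x$, identify that P-view with the path from $x$ to the root of the (unfolded) computation tree, observe that $\orddec_{\ord x}(c)$ is by definition the greedy strictly-order-increasing subsequence of lambda nodes of $top_2\,c$ read from the top, and conclude by incremental binding that the topmost lambda node of order $>\ord x$ on that path --- the last element of $\orddec_{\ord x}(c)$ --- is exactly $x$'s binder. The only real difference is that you propose to re-derive the P-view correspondence by induction on the $\rightarrow$-derivation (the obstacle you flag in the $(V_0)$/$(V_1)$ cases), whereas the paper simply cites the HMOS results that $top_2\,c$ is the P-view and that the P-view is the path to the root, which renders your anticipated induction --- and the separate argument for the greedy characterization, which is literally the definition of $\orddec_l$ --- unnecessary.
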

\begin{proof}
\begin{enumerate}[(i)]
\item The top $1$-stack of a $\rightarrow$-reachable configuration contains the P-view of some traversal whose last visited node is the $top_1$ symbol \cite[Corollary 8]{hague-collaps-full}; and
    the P-view of a traversal is exactly the path (in the unfolding of) the
    computation graph from the last visited node to
    the root \cite[Proposition 6]{OngLics2006}. Hence the binder of $x$, whose order
    is strictly greater than $\ord{x}$, occurs in the top $1$-stack.
    Consequently $\orddec(c)$ must contain at least one node of order strictly greater than $l$.

\item Since the recursion scheme is incrementally-bound, $x$'s binder is precisely the first $\lambda$-node in the
 path to the root in the computation tree with order strictly
 greater than $x$. \qedhere
\end{enumerate}
\end{proof}

\subsection{Stack safety}

\begin{definition}[Safe stack]
\label{dfn:safestack} Let $s$ be an order-$j$ non-empty stack for $j\geq1$.

The stack $s$ is \defname{$l$-safe} iff
    \begin{enumerate}[1.]
    \item $\orddec_l(s) = \langle \lambda \overline\eta_r^1, \ldots ,
    \lambda \overline{\eta}_1^1 \rangle$ for some $r\geq 0$ {\it i.e.}, the height component of the links in $\orddec_l(s)$ are all equal to $1$;

    \item for all $1 \leq q \leq r$ such that $n-\ord{\lambda \overline\eta_q}+1 \leq \ord{s}$:
    \begin{itemize}
    \item for $q = 1$ we have $collapse~s_{\prefixof \lambda\overline\eta_1}$ is $l$-safe;
    \item for $q>1$ we have $collapse~s_{\prefixof \lambda\overline\eta_q}$ is $\ord(\lambda \overline\eta_{q-1})$-safe.
    \end{itemize}
    \end{enumerate}

We say that $s$ is \defname{safe} if it is $0$-safe.
\end{definition}

Since $s$ is a stack, and not necessarily a configuration, it may have dangling pointers.
The condition $n-\ord{\lambda \overline\eta_q}+1 \leq \ord{s}$ in the definition ensures that
$\lambda\overline\eta_j$'s link is not dangling so that we can indeed perform a collapse at $\lambda\overline\eta_j$.

%
%

\begin{lemma}
\label{lem:stacksafety_immediate results}
Let $s$ be a stack such that $\orddec_l(s) = \langle \lambda \overline\eta_r^{k_r}, \ldots, \lambda \overline{\eta}_1^{k_1} \rangle$ and $l\geq 0$.
If $s$ is $l$-safe and $l \leq k \leq n$ then $s$ is $k$-safe.
\end{lemma}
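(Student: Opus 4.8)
The plan is to leverage the monotonicity relation \eqref{eqn_safe_prefix}, namely $l \le k \implies \orddec_k(s) \prefixof \orddec_l(s)$, so that the order-decomposition with respect to $k$ is simply a suffix (in the left-to-right reading of the stack, an initial segment in the sequence order $\langle \lambda\overline\eta_r^{k_r},\ldots\rangle$) of the order-decomposition with respect to $l$. Concretely, if $\orddec_l(s) = \langle \lambda\overline\eta_r^{k_r},\ldots,\lambda\overline\eta_1^{k_1}\rangle$ then there is some $q_0 \in \{1,\ldots,r+1\}$ such that $\orddec_k(s) = \langle \lambda\overline\eta_r^{k_r},\ldots,\lambda\overline\eta_{q_0}^{k_{q_0}}\rangle$, where $q_0$ is the least index with $\ord{\lambda\overline\eta_{q_0}} > k$ (and the decomposition is empty if no such index exists). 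First I would make this precise: since the $\ord{\lambda\overline\eta_q}$ are strictly increasing in $q$ (a property recorded in the Observation preceding the definition of $\orddec$), the nodes dropped when passing from $l$ to $k$ are exactly $\lambda\overline\eta_1^{k_1},\ldots,\lambda\overline\eta_{q_0-1}^{k_{q_0-1}}$, all of order $\le k$.

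Next I would verify the two clauses of Definition \ref{dfn:safestack} for $k$-safety of $s$, using that they already hold for $l$-safety. Clause 1 (all link heights in $\orddec_k(s)$ equal $1$) is immediate: $\orddec_k(s)$ is a sub-sequence of $\orddec_l(s)$, and by $l$-safety every link height in the latter is $1$. For clause 2, fix $q$ with $q_0 \le q \le r$ and $n - \ord{\lambda\overline\eta_q} + 1 \le \ord{s}$. If $q > q_0$, the requirement for $k$-safety is that $collapse~s_{\prefixof \lambda\overline\eta_q}$ be $\ord(\lambda\overline\eta_{q-1})$-safe — but this is literally the same requirement that $l$-safety of $s$ imposes (the index $q-1 \ge q_0 \ge 1$, so we are in the $q>1$ case of clause 2 for both $l$ and $k$), so it holds by hypothesis. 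The only genuinely new case is $q = q_0$: here $k$-safety demands that $collapse~s_{\prefixof \lambda\overline\eta_{q_0}}$ be $k$-safe, whereas $l$-safety only told us it is $\ord(\lambda\overline\eta_{q_0-1})$-safe (or $l$-safe, if $q_0 = 1$). I would close this gap by induction: note $\ord(\lambda\overline\eta_{q_0-1}) \le k < \ord(\lambda\overline\eta_{q_0}) \le n$ (the last inequality because $\lambda\overline\eta_{q_0}$ occurs as a node, hence has order at most $n$), so $collapse~s_{\prefixof \lambda\overline\eta_{q_0}}$ is a strictly smaller stack that is $\ord(\lambda\overline\eta_{q_0-1})$-safe with $\ord(\lambda\overline\eta_{q_0-1}) \le k \le n$, and the inductive hypothesis of the lemma (applied to that smaller stack) upgrades it to $k$-safe. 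When $q_0 = 1$ the same argument applies with $l$ in place of $\ord(\lambda\overline\eta_{q_0-1})$, since we are given $l \le k \le n$.

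The induction needs to be set up on a well-founded measure — the natural one is the size of the stack $s$ (number of symbols), since each invocation of clause 2 passes to $collapse~s_{\prefixof \lambda\overline\eta_q}$, which is a proper prefix of $s$ after a collapse and hence strictly smaller. The base case (a stack whose order-decomposition with respect to $l$ is empty, e.g.\ a single symbol) is vacuous. The main obstacle is really just bookkeeping: being careful that the index shift from $l$-safety's clause 2 to $k$-safety's clause 2 lines up (the "$q>1$ versus $q=1$" split must be tracked on both sides), and that the order bounds $\ord(\lambda\overline\eta_{q_0-1}) \le k \le n$ required to invoke the inductive hypothesis are always available — they follow from $q_0$ being the least index with $\ord{\lambda\overline\eta_{q_0}}>k$ together with $\ord{\lambda\overline\eta_q}\le n$ for all nodes. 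Once the suffix structure of $\orddec_k(s)$ inside $\orddec_l(s)$ is pinned down, the rest is a routine case check.
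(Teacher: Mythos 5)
Your proposal is correct, and it is the intended argument: the paper's own proof is just the one-line ``immediate consequence of the definition,'' so you have in effect supplied the details it omits. You correctly identify the one point that is not literally immediate, namely the boundary node $\lambda\overline\eta_{q_0}$ of $\orddec_k(s)$, where $l$-safety only yields $\ord(\lambda\overline\eta_{q_0-1})$-safety (or $l$-safety when $q_0=1$) of $collapse~s_{\prefixof\lambda\overline\eta_{q_0}}$ and an inductive appeal to the lemma itself on the strictly smaller stack is needed to upgrade this to $k$-safety; the remaining clauses coincide verbatim with those of $l$-safety, as you note.
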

\begin{proof}
Immediate consequence of the definition.
\end{proof}

\begin{lemma}[Collapse simulation]
\label{lem:safecollapsesimulation}
Let $s$ be a sub-stack of a reachable configuration of $CPDA(G)$ and $l\geq 0$.
If $\ord s \geq 2$ and $top_2\,s$ is $l$-safe or if $\ord s = 1$ and $s$ is $l$-safe then for any lambda node $\lambda\overline\eta$ in $\orddec_l(s)$ we have:
$$collapse\,s_{\prefixof\lambda\overline\eta} =
\left\{
\begin{array}{ll}
pop_1\,s_{\prefixof\lambda\overline\eta} & \mbox{if $\lambda\overline\eta$ is prime,}\\
pop_{n-\ord \lambda\overline\eta +1}\,s_{\prefixof\lambda\overline\eta} & \mbox{otherwise.}
\end{array}
\right.
$$
\end{lemma}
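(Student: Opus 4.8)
The plan is to unfold the definition of $collapse$ and reduce the whole statement to the single fact that, inside $s_{\prefixof\lambda\overline{\eta}}$, the top symbol $\lambda\overline{\eta}$ carries a link of height exactly $1$.

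First I would describe the top of $s_{\prefixof\lambda\overline{\eta}}$. Since $\lambda\overline{\eta}$ is a node of $\orddec_l(s)$ it occurs in the top order-$1$ stack of $s$, and the prefix operation only deletes the elements of that $1$-stack lying strictly above $\lambda\overline{\eta}$; it therefore alters neither $\lambda\overline{\eta}$, nor its link, nor the target of that link, and it does not change the order of the stack. Combined with the top-stack convention (Convention~\ref{conv:top_preserve_links}), this gives $top_1\,s_{\prefixof\lambda\overline{\eta}} = \lambda\overline{\eta}^{(o,h)}$, where $(o,h)$ is exactly the link $\lambda\overline{\eta}$ has inside $s$. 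By the definition of $collapse$ in Definition~\ref{cpdastackopearations} we then have $collapse\,s_{\prefixof\lambda\overline{\eta}} = pop_o^{h}\,s_{\prefixof\lambda\overline{\eta}}$, so it only remains to pin down $o$ and $h$.

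Next I would identify $o$ and $h$. For the order $o$: since $s$ is a sub-stack of a reachable configuration, the link convention fixed earlier for $CPDA(G)$ gives $o=1$ when $\lambda\overline{\eta}$ is a prime lambda node and $o = n - \ord{\lambda\overline{\eta}} + 1$ otherwise. For the height $h$ I split on the same dichotomy. If $\lambda\overline{\eta}$ is prime then $o=1$, and the well-formedness constraint on CPDA stacks ($j=1$ forces $k=1$) gives $h=1$ immediately, so this case does not use the safety hypothesis at all. If $\lambda\overline{\eta}$ is not prime, I invoke the hypothesis: $\orddec_l(\cdot)$ only looks at the top order-$1$ stack, and $s$ and $top_2\,s$ have the same top order-$1$ stack, so $\orddec_l(s) = \orddec_l(top_2\,s)$ when $\ord s \geq 2$; hence in either case of the hypothesis, clause~1 of the definition of an $l$-safe stack (Definition~\ref{dfn:safestack}) applies and forces every link height occurring in $\orddec_l(s)$ to equal $1$. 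As $\lambda\overline{\eta}^{h}$ is one of the nodes listed in $\orddec_l(s)$, this yields $h=1$.

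Putting the two steps together, $collapse\,s_{\prefixof\lambda\overline{\eta}} = pop_o^{1}\,s_{\prefixof\lambda\overline{\eta}} = pop_o\,s_{\prefixof\lambda\overline{\eta}}$ with $o$ as identified above, which is precisely the right-hand side of the claimed identity; in the degenerate case $n - \ord{\lambda\overline{\eta}} + 1 > \ord s$ the link is dangling and both sides are undefined, so the equality holds vacuously (and this case does not occur where the lemma is applied). I do not expect a genuine obstacle here: the only point that needs care is the bookkeeping of which object the safety hypothesis constrains --- $s$ itself when $\ord s = 1$ versus $top_2\,s$ when $\ord s \geq 2$ --- matched against what $\orddec_l(\cdot)$ and the prefix operation actually see; once the three relevant conventions (the link orders of lambda nodes, link-preservation by $top$, and the $j=1 \Rightarrow k=1$ well-formedness constraint) are lined up the proof is essentially a one-line unfolding of $collapse$.
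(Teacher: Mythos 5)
Your proposal is correct and follows essentially the same route as the paper's proof: unfold $collapse$ as $pop_o^h$, read off the link order $o$ from the link convention for reachable configurations (prime vs.\ non-prime), and deduce $h=1$ from clause~1 of the $l$-safety of the relevant stack, since $\lambda\overline{\eta}$ lies in $\orddec_l(s)$. Your extra observations (that prefixing preserves the link, that the prime case already forces $h=1$ by the well-formedness constraint, and the remark about dangling links) are harmless refinements of the same argument.
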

\begin{proof}
The $collapse$ operation is defined as  $collapse\,s = pop_j^k\,s$ where $(j,k)\in\nat\times\nat$ is the link attached to $top_1\,s$.
Since $s$ is a sub-stack of a reachable configuration we have $j=1$ if $\lambda\overline\eta$ is prime and $j = n-\ord \lambda\overline\eta +1$ otherwise. Furthermore, since $top_2\,s$ is safe and $\lambda\overline\eta$ belongs to the order-decomposition, the height component necessarily equals $1$.
\end{proof}

\subsubsection{Operations preserving stack safety}
\begin{lemma}
\label{lem:push1pop1_preserves_safety} Let $s$ be a higher-order
stack. Suppose that $s$ is $l$-safe, $l\geq0$. Then:
    \begin{enumerate}[(i)]
    \item $top_{\ord{s}}\,s$ is $l$-safe;
    \item if $top_1\,s$ is not a lambda node then $pop_1\,s$ is $l$-safe;
    \item for every non-lambda node $a$, $1\leq j\leq n$, $k\geq 1$, $push_1\,a^{(j,k)}\,s$ is $l$-safe;
    \item for every lambda node $a$, $push_1\,a^{(1,1)}\,s$ is $l$-safe if $\ord a < l$, and safe if $\ord a \geq l$.
\end{enumerate}
\end{lemma}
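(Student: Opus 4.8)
The plan is to prove the four items by unwinding Definition \ref{dfn:safestack}, leaning heavily on Lemma \ref{lem:push1pop1_orderdecompo} to track how the order-decomposition changes under $top$, $pop_1$ and $push_1$, and on the fact that $s$ being $l$-safe already gives us the recursive collapse-conditions we need to re-assemble.

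For (i), note that $\orddec_l(top_{\ord s}\,s) = \orddec_l(s)$ because $\orddec_l$ is by definition computed from $top_2$ (hence from the top $1$-stack) and, by Convention \ref{conv:top_preserve_links}, $top_i$ preserves all links; so condition~1 of $l$-safety is inherited verbatim. For condition~2, the prefixes $s_{\prefixof\lambda\overline\eta_q}$ for the $\lambda\overline\eta_q$ in the decomposition all lie inside the top $(\ord s{-}1)$-stack once $q$ is such that $n-\ord{\lambda\overline\eta_q}+1\leq\ord s$ — actually one must check that $top_{\ord s}$ does not disturb these prefixes, which follows since $top_{\ord s}$ only strips stacks strictly below the top one and the relevant $\lambda\overline\eta_q$ live in (or above the level of) that top stack; so the recursive $l$-safety (resp.\ $\ord(\lambda\overline\eta_{q-1})$-safety) assertions transfer unchanged. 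Items (ii) and (iii) are the easy cases: by Lemma \ref{lem:push1pop1_orderdecompo}(iii) and (ii) respectively, neither $pop_1$ (when $top_1 s$ is not a lambda) nor $push_1\,a^{(j,k)}$ (for non-lambda $a$) changes $\orddec_l(s)$ at all, and moreover these operations do not alter the sub-stacks $s_{\prefixof\lambda\overline\eta_q}$ nor the links inside them (a $push_1$ of a new top symbol leaves every proper prefix untouched, and a $pop_1$ of a non-lambda leaves the decomposition's prefixes untouched since the popped symbol is not one of the $\lambda\overline\eta_q$); hence both conditions of $l$-safety are preserved mechanically.

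The substantive case is (iv), $push_1\,a^{(1,1)}\,s$ for a lambda node $a$. Write $\orddec_l(s)=\langle\lambda\overline\eta_r^{k_r},\dots,\lambda\overline\eta_1^{k_1}\rangle$; since $s$ is $l$-safe all $k_q=1$. I split on $\ord a$ exactly as in Lemma \ref{lem:push1pop1_orderdecompo}(i). If $\ord a<l$ (the ``$l$-safe'' conclusion): then $\ord a\le l$, so by Lemma \ref{lem:push1pop1_orderdecompo}(i) the $l$-decomposition is unchanged, and since the new top symbol $a$ sits strictly above every $\lambda\overline\eta_q$, all prefixes $s_{\prefixof\lambda\overline\eta_q}$ and their $collapse$s are literally the same as before, so $l$-safety carries over. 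If $\ord a\ge l$ (the ``safe'', i.e.\ $0$-safe, conclusion): by Lemma \ref{lem:push1pop1_orderdecompo}(i) either $\orddec_l(push_1 a^{(1,1)}s)=\langle a^1\rangle$ (when $\ord a\ge\ord{\lambda\overline\eta_r}$) or it equals $\langle\lambda\overline\eta_r^1,\dots,\lambda\overline\eta_i^1,a^1\rangle$ with $i=\min\{q:\ord a<\ord{\lambda\overline\eta_q}\}$. The height of $a$'s own link is $1$ by construction, and the retained $\lambda\overline\eta_q$ still have height $1$, so condition~1 holds. For condition~2 I must verify the collapse-conditions for the new decomposition. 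The topmost entry is $a$ with a $1$-link pointing to its immediate predecessor, so $collapse\,(push_1 a^{(1,1)}s)_{\prefixof a}=pop_1(push_1 a^{(1,1)}s)=s$; the required safety level for this topmost entry is $\ord(\lambda\overline\eta_i)$ (or $l$ if $a$ is the only entry), and $s$ is $l$-safe, hence — using Lemma \ref{lem:stacksafety_immediate results} together with $l\le\ord(\lambda\overline\eta_i)\le n$ — $s$ is $\ord(\lambda\overline\eta_i)$-safe, giving exactly what is needed. For the remaining entries $\lambda\overline\eta_i,\dots,\lambda\overline\eta_1$: their prefixes $(push_1 a^{(1,1)}s)_{\prefixof\lambda\overline\eta_q}$ coincide with $s_{\prefixof\lambda\overline\eta_q}$ (pushing $a$ on top does not touch them), so their collapses and the required safety levels are inherited verbatim from the $l$-safety of $s$ — for $q>1$ the level $\ord(\lambda\overline\eta_{q-1})$ is unchanged since we kept the prefix of the decomposition up to index $i$, and the one junction case $q=i$ where the ``previous'' entry changes from $\lambda\overline\eta_{i-1}$ (in $s$) to being absent needs the base clause $l$-safe of $s$, which again upgrades to $\ord(\lambda\overline\eta_{i-1})$-safe by Lemma \ref{lem:stacksafety_immediate results}. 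Assembling these gives $0$-safety.

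The main obstacle I anticipate is purely bookkeeping: being careful that the finitely many prefix sub-stacks named in condition~2 of the new decomposition are genuinely unchanged by the operation in question (so that their previously-established $l$-safety is reusable verbatim), and handling the ``junction index'' $i$ in case (iv) where the recursive required-safety-level of one retained entry is computed against a decomposition tail that has just been truncated — there one must invoke Lemma \ref{lem:stacksafety_immediate results} to raise an $l$-safe (or $\ord(\lambda\overline\eta_{i-1})$-safe) hypothesis to the slightly larger level demanded by the new decomposition. No genuinely new idea beyond Lemmas \ref{lem:push1pop1_orderdecompo} and \ref{lem:stacksafety_immediate results} and the link-preservation conventions should be required.
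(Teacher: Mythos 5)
Your overall strategy --- unwinding Definition \ref{dfn:safestack} and tracking the order-decomposition via Lemma \ref{lem:push1pop1_orderdecompo}, with Lemma \ref{lem:stacksafety_immediate results} to adjust levels --- is exactly the paper's, and items (i)--(iii) are fine. But in item (iv), case $\ord a\geq l$, you have reversed the convention in condition~2 of Definition \ref{dfn:safestack}. In $\orddec_l(s)=\langle\lambda\overline\eta_r^1,\dots,\lambda\overline\eta_1^1\rangle$ the index $q=1$ names the \emph{topmost} (lowest-order) node; the definition demands that $collapse\,s_{\prefixof\lambda\overline\eta_q}$ be $\ord{\lambda\overline\eta_{q-1}}$-safe for $q>1$, i.e.\ safe at the order of the entry \emph{above} $\lambda\overline\eta_q$, with the ambient level $l$ reserved for the topmost entry $q=1$. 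You instead assign to each entry the order of the entry \emph{below} it (and the ambient level to the bottommost). The two readings are not equivalent, and yours makes the hard obligation disappear: for the new decomposition $\langle\lambda\overline\eta_r^1,\dots,\lambda\overline\eta_i^1,a^1\rangle$ the genuine $q=1$ obligation is that $collapse\,(push_1\,a^{(1,1)}\,s)_{\prefixof a}=pop_1(push_1\,a^{(1,1)}\,s)=s$ be $0$-safe. That is immediate when $l=0$ (the only instance in which the paper ever invokes clause (iv)), but it does \emph{not} follow from mere $l$-safety of $s$ when $l>0$, and Lemma \ref{lem:stacksafety_immediate results} cannot bridge the gap since it only raises the safety level (indeed Lemma \ref{lem:incrk_qsafe} manufactures stacks that are $l$-safe but not $0$-safe). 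Your claim that the required level for $a$ is $\ord{\lambda\overline\eta_i}$, discharged by upgrading $l$-safe to $\ord{\lambda\overline\eta_i}$-safe, therefore proves the wrong obligation.

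The bookkeeping for the retained entries is shifted by one as well: under the definition, $\lambda\overline\eta_i$ (now at position $2$) must collapse to an $\ord a$-safe stack, which you obtain from the hypothesis ($l$-safe if $i=1$, $\ord{\lambda\overline\eta_{i-1}}$-safe if $i>1$) via Lemma \ref{lem:stacksafety_immediate results}, using $\ord{\lambda\overline\eta_{i-1}}\leq\ord a$ by minimality of $i$; the deeper entries carry over verbatim with unchanged levels. A secondary slip: you compute $\orddec_l$ of the pushed stack where $0$-safety is a statement about $\orddec_0$; in the boundary case $\ord a=l$ the first clause of Lemma \ref{lem:push1pop1_orderdecompo}(i) actually gives $\orddec_l(push_1\,a^{(1,1)}\,s)=\orddec_l(s)$, and the relevant fact is $\orddec_0(push_1\,a^{(1,1)}\,s)=\orddec_l(s)\cdot a^1$, which the paper's proof singles out. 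Your displayed sequence happens to coincide with the correct $\orddec_0$, so that is only a mislabelling; the reversed condition-2 convention, however, is a genuine gap.
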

\begin{proof}
This is a direct consequence of Lemma \ref{lem:push1pop1_orderdecompo}.
For (vi), the cases $\ord a > l$ and $\ord a < l$ follow immediately from Lemma \ref{lem:push1pop1_orderdecompo}(i);
for $\ord a = l$ it follows from the fact that $\orddec_0 (push_1\,a^{(1,1)}\,s) = \orddec_l (s) \cdot a^1$.
\end{proof}

\begin{lemma}
\label{lem:incrk_qsafe}
Let $0\leq l < n$, $q\geq 0$ and $s$ be a stack of level $1\leq \ord{s} <n$.
If $s$ is $q$-safe then $s^{\langle n-l+1 \rangle}$ is $\max(l,q)$-safe.
\end{lemma}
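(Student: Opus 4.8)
The plan is to unwind the definition of $l$-safety (Definition~\ref{dfn:safestack}) for the modified stack $s^{\langle n-l+1 \rangle}$, using the key observation that the operation $\cdot^{\langle n-l+1 \rangle}$ only touches links \emph{of order $n-l+1$}, and that a link has order $n-l+1$ precisely when it is attached to a non-prime lambda node of order $l$ (by the link convention recalled at the end of Section~\ref{sec:IncrementallyBoundRSToPDA}). First I would record the effect of $\cdot^{\langle n-l+1 \rangle}$ on the order-decomposition: since $\cdot^{\langle n-l+1 \rangle}$ changes no node labels, only link heights, we have $\orddec_m(s^{\langle n-l+1 \rangle})$ equal to $\orddec_m(s)$ as a sequence of nodes, with the height component of any node of order exactly $l$ incremented by $1$ (and all other heights unchanged). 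Writing $\orddec_q(s) = \langle \lambda\overline\eta_r^{k_r}, \ldots, \lambda\overline\eta_1^{k_1}\rangle$, this means each $k_i$ with $\ord{\lambda\overline\eta_i} = l$ becomes $k_i + 1$.

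Next I would split on how $l$ compares with $q$. If $l \le q$, then $\orddec_q(s)$ contains no node of order $\le q$, in particular none of order exactly $l$, so $\cdot^{\langle n-l+1 \rangle}$ leaves $\orddec_q(s)$ entirely unchanged; moreover it cannot affect the collapse-stacks $collapse\,s_{\prefixof \lambda\overline\eta_q}$ appearing in condition~2, since those collapses land inside $s_{<\lambda\overline\eta_q}$ where (by $q$-safety, condition~1) all decomposition links already have height $1$ and the relevant recursive safety is witnessed at order $\ge q \ge l$ — here one argues by induction on $\ord{s}$, appealing to the recursive structure of Definition~\ref{dfn:safestack} and the fact that $\cdot^{\langle n-l+1 \rangle}$ commutes with prefixing and with $collapse$ (it commutes with $pop_j$ on stacks where no dangling link is created). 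Hence $s^{\langle n-l+1 \rangle}$ is still $q$-safe, and $\max(l,q) = q$, giving the claim; this case is essentially just Lemma~\ref{lem:stacksafety_immediate results} combined with the fact that the links being incremented are not in scope.

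The substantive case is $l > q$, where we must show $s^{\langle n-l+1 \rangle}$ is $l$-safe. By $q$-safety and Lemma~\ref{lem:stacksafety_immediate results}, $s$ is already $l$-safe, so $\orddec_l(s) = \langle \lambda\overline\eta_r^1, \ldots, \lambda\overline\eta_{i+1}^1\rangle$ for the suffix of the decomposition consisting of nodes of order $> l$ (using~\eqref{eqn_safe_prefix}); crucially every such node has order \emph{strictly greater than} $l$, so its link is not of order $n-l+1$ and is untouched by $\cdot^{\langle n-l+1 \rangle}$. Therefore $\orddec_l(s^{\langle n-l+1 \rangle}) = \orddec_l(s)$ on the nose, and condition~1 of $l$-safety is inherited. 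For condition~2, each collapse-stack $collapse\,(s^{\langle n-l+1 \rangle})_{\prefixof \lambda\overline\eta_q}$ equals $(collapse\,s_{\prefixof \lambda\overline\eta_q})^{\langle n-l+1 \rangle}$ (commutation of $\cdot^{\langle j\rangle}$ with prefixing and with $collapse$, the latter valid here because $s$ is a sub-stack of a reachable configuration so the links are well-formed — or, staying purely at the stack level, because on a $q$-safe / $l$-safe stack the collapse is a genuine $pop_1$ or $pop_{n-\ord{\cdot}+1}$ by the argument of Lemma~\ref{lem:safecollapsesimulation}). This reduces condition~2 to the claim for these strictly smaller collapse-stacks: for $q > 1$ we need $(collapse\,s_{\prefixof \lambda\overline\eta_q})^{\langle n-l+1\rangle}$ to be $\ord{\lambda\overline\eta_{q-1}}$-safe, which follows from the induction hypothesis applied to the $\ord{\lambda\overline\eta_{q-1}}$-safe stack $collapse\,s_{\prefixof \lambda\overline\eta_q}$, since $\ord{\lambda\overline\eta_{q-1}} > l$ forces $\max(l,\ord{\lambda\overline\eta_{q-1}}) = \ord{\lambda\overline\eta_{q-1}}$; the case $q=1$ is identical with $\ord{\lambda\overline\eta_0}$ read as the base order appearing in Definition~\ref{dfn:safestack}.

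The main obstacle I anticipate is bookkeeping the interaction between $\cdot^{\langle n-l+1 \rangle}$ and $collapse$ cleanly: one must be sure that incrementing order-$(n-l+1)$ links does not turn a collapse that previously landed at some $\lambda\overline\eta_q$ into one that overshoots, and conversely that the recursive collapse-stacks in condition~2 are exactly the $\cdot^{\langle n-l+1\rangle}$-images of the original ones. This is why the hypothesis $\ord s < n$ matters — it guarantees, together with the reachability/well-formedness of links, that every link of order $n-l+1$ occurring in $s$ is actually \emph{dangling} (it points below the bottom of $s$), so incrementing its height changes nothing about which collapses are performed \emph{within} $s$; the increment only becomes meaningful once $s$ is re-embedded as a top-$2$ stack of a larger stack, which is exactly the situation in which this lemma will later be invoked. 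I would make this "$n-l+1$-links in $s$ are dangling" observation explicit as the first line of the proof, as it trivialises both the commutation with $collapse$ and the case $l \le q$.
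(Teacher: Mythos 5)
Your main line of argument is essentially the paper's: induction on the size of $s$, the observation that $(\cdot)^{\langle n-l+1 \rangle}$ only increments the heights of links attached to (non-prime) lambda nodes of order exactly $l$, hence leaves untouched every node of $\orddec_{\max(l,q)}(s)$ (all of order $>\max(l,q)\geq l$), commutation with $collapse$ at those nodes, and the induction hypothesis applied to the recursive collapse-stacks together with the arithmetic $\max(l,m)=m$ for $m>l$. The paper does not case-split on the relative order of $l$ and $q$ but works uniformly with the cut-off index $b$ of $\orddec_{\max(l,q)}(s)$ inside $\orddec_q(s)$; your split into $l\leq q$ and $l>q$ is a harmless reorganization of the same proof.

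Two points need fixing before this becomes a proof. First, the observation you propose to make ``the first line of the proof''---that every link of order $n-l+1$ occurring in $s$ is dangling because $\ord{s}<n$---is false: danglingness requires $n-l+1>\ord{s}$, whereas the hypothesis only gives $\ord{s}\leq n-1$, and for $l\geq 2$ one can have $n-l+1\leq n-1=\ord{s}$ (e.g.\ $n=3$, $l=2$, $\ord{s}=2$). The claim happens to hold at the lemma's unique call site (Lemma \ref{lem:pushj_safe_implies_l-safe} applies it to a stack of order $n-l$), but the lemma is stated more generally, and nothing in your substantive argument actually needs it: what matters is only that every node at which the safety conditions demand a collapse lies in some order-decomposition at level $\geq\max(l,q)\geq l$, hence has order $>l$ and link order $<n-l+1$, so its link is untouched. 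Drop the dangling claim and keep that weaker, correct statement. Second, in the case $l\leq q$ you say one ``argues by induction on $\ord{s}$''; that measure does not decrease, since $collapse$ returns a stack of the same order as its argument---the induction must be on the size of the stack, as you in fact use in the case $l>q$ and as the paper does.
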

\begin{proof}
Let $s$ be a safe stack with $1\leq \ord s <n$. We prove the result by induction on the size of $s$.
The base case is the trivial: $s$ is the empty stack.
Step case: Since $s$ is $q$-safe we have
$\orddec_q(s) = \langle \lambda \overline{\eta}_r^1
, \ldots, \lambda \overline{\eta}_1^1   \rangle$.
By (\ref{eqn_safe_prefix}), $\orddec_{\max(l,q)}(s)$ is a prefix of $\orddec_q(s)$.
Let $b$ be the index in $\orddec_q(s)$ of the last node of $\orddec_{\max(l,q)}(s)$:
thus $\lambda\overline\eta_b$ is the last lambda node in $top_2\, s$ with order $>\max(l,q)$.

The stack-operation $(\cdot)^{\langle n-l+1 \rangle}$ updates the pointers as follows:
the height component of each link is incremented if the order of the stack symbol is $l$ and is kept unchanged otherwise.
Hence we have:
\begin{equation*}
\orddec_q(s^{\langle n-l+1 \rangle}) = \langle
\lambda \overline{\eta}_r^1, \ldots,  \lambda \overline{\eta}_{b}^1, \lambda \overline{\eta}_{b-1}^{k}, \lambda \overline{\eta}_{b-2}^1 \ldots,
 \lambda \overline{\eta}_1^{1} \rangle
\end{equation*}
for some $1 \leq k\leq 2$. And therefore:
\begin{equation}
\orddec_{\max(l,q)}(s^{\langle n-l+1 \rangle}) = \langle \lambda \overline{\eta}_r^1
, \ldots, \lambda \overline{\eta}_b^1 \rangle\ . \label{eqn:trivialityofcollapse}
\end{equation}

Now consider an index $j$ such that $b\leq j \leq r$ and $n-\ord{\lambda\overline\eta_j} +1 \leq \ord s$.
Since the height component of $\lambda\overline\eta_j$'s link is not affected by the operation
$(\cdot)^{\langle n-l+1 \rangle}$, this operation commutes with $collapse$ and we have:
\begin{equation}
 collapse\, s^{\langle n-l+1 \rangle}_{\prefixof \lambda \overline{\eta}_{j}}
= (collapse\, s_{\prefixof \lambda \overline{\eta}_{j}})^{\langle n-l+1 \rangle} \ . \label{eqn:commutecollapse}
\end{equation}

By assumption $s$ is $q$-safe therefore $collapse\,s_{\prefixof \lambda \overline{\eta}_{j}}$ is $q$-safe
if $j=b=1$ and $\ord(\lambda\overline\eta_{j-1})$-safe if $j>b\geq1$.

Since $collapse\,s_{\prefixof \lambda\overline\eta_j}$ is strictly smaller than $s$, by the induction hypothesis we have that
$(collapse\,s_{\prefixof \lambda\overline\eta_j})^{\langle n-l+1\rangle}$ is $\max(q,l)$-safe
if $j=b=1$, and $\max(\ord(\lambda\overline\eta_{j-1}),l)$-safe if $j\geq b>1$.

For $j>b$, $\lambda\overline\eta_{j-1}$ occurs in $\orddec_l\, s$ therefore $\ord(\lambda\overline\eta_{j-1}) > l$,
similarly for $j=b$ we have $\ord(\lambda\overline\eta_{j-1}) \leq l$.
Hence $(collapse\,s_{\prefixof \lambda\overline\eta_j})^{\langle n-l+1\rangle}$ is
\begin{enumerate}[(i)]
\item $\max(q,l)$-safe for $j=b=1$,
\item $l$-safe for $j=b>1$, and therefore $\max(q,l)$-safe by Lemma \ref{lem:stacksafety_immediate results},
\item $\lambda\overline\eta_{j-1}$-safe for $j>b$.
\end{enumerate}
This shows that $(collapse\,s_{\prefixof \lambda\overline\eta_j})^{\langle n-l+1\rangle}$ is $\max(l,q)$-safe, and therefore by (\ref{eqn:commutecollapse})
so is $collapse\, s^{\langle n-l+1 \rangle}_{\prefixof \lambda \overline{\eta}_{j}}$.
\end{proof}

\begin{lemma}
\label{lem:cons_qsafety} Let $s$ be a higher-order stack of level $\geq 2$ and $l\geq0$.
 If
\begin{enumerate}[1.]
\item $pop_{\ord{s}}\,s$ is safe,
\item and $top_{\ord{s}}\,s$ is $l$-safe,
\end{enumerate}
then $s$ is $l$-safe.
\end{lemma}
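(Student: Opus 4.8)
The plan is to unfold the definition of $l$-safety for $s$ directly, using the two hypotheses together with the fact that $\orddec_l$ depends only on the top order-$1$ stack (equivalently on $top_2\,s$). Write $m = \ord{s} \geq 2$. First I would observe that since $\orddec_l(s)$ is computed from $top_2\,s$, and $top_2\,s$ is a sub-stack of $top_m\,s = top_{\ord{s}}\,s$, we have $\orddec_l(s) = \orddec_l(top_m\,s)$; hence condition 1 of Definition \ref{dfn:safestack} for $s$ (all height components in $\orddec_l(s)$ equal $1$) follows immediately from condition 1 for $top_m\,s$, which holds by hypothesis 2.

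For condition 2, write $\orddec_l(s) = \langle \lambda\overline\eta_r^1, \ldots, \lambda\overline\eta_1^1\rangle$ and fix $q$ with $1 \leq q \leq r$ and $n - \ord{\lambda\overline\eta_q} + 1 \leq \ord{s} = m$. The key case split is on whether $n - \ord{\lambda\overline\eta_q} + 1 \leq m - 1$ or $n - \ord{\lambda\overline\eta_q} + 1 = m$. In the first case the collapse at $\lambda\overline\eta_q$ stays entirely within $top_m\,s$: more precisely $collapse\,s_{\prefixof\lambda\overline\eta_q}$ agrees, on everything strictly above $pop_m\,s$, with $collapse\,(top_m\,s)_{\prefixof\lambda\overline\eta_q}$, and since $pop_m\,s$ is safe (hypothesis 1) while $top_m\,s$ is $l$-safe, the required safety (either $l$-safe for $q=1$ or $\ord{\lambda\overline\eta_{q-1}}$-safe for $q>1$) is inherited directly from condition 2 applied to $top_m\,s$, modulo reattaching the safe lower part $pop_m\,s$ — which is exactly where one would invoke $pop_m\,s$ being safe together with a small "gluing" observation that prepending a safe lower stack to a $k$-safe top stack yields a $k$-safe stack (this is essentially the converse packaging of the statement being proved, at one lower level, so it should be arranged as a simultaneous induction on $\ord{s}$ or cited as the analogous lower-order instance). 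In the boundary case $n - \ord{\lambda\overline\eta_q} + 1 = m$, the collapse at $\lambda\overline\eta_q$ is a $pop_m$, so $collapse\,s_{\prefixof\lambda\overline\eta_q}$ is a prefix of $pop_m\,s$, hence safe by hypothesis 1, and therefore $k$-safe for every relevant $k$ by Lemma \ref{lem:stacksafety_immediate results}; this covers the demand of condition 2 regardless of whether $q=1$ or $q>1$.

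The main obstacle I anticipate is making the "gluing" step in the sub-$m$ case precise: one needs that $collapse$ and stack-prefix at a node of $\orddec_l(s)$ interact with the $top_m$/$pop_m$ decomposition in the expected way (the collapse order $n - \ord{\lambda\overline\eta_q}+1$ being $< m$ means the operation does not touch the bottom-most $(m{-}1)$-stacks), and that $l$-safety of a stack whose top $(m{-}1)$-stack is $l$-safe and whose remaining lower part is safe is genuinely recovered — which is the statement of this very lemma one order down. I would therefore phrase the whole argument as an induction on $\ord{s}$, with the order-$2$ case as the base (there $pop_2\,s$ and $top_2\,s$ are order-$1$ stacks, the collapse at any $\lambda\overline\eta_q$ with $n - \ord{\lambda\overline\eta_q}+1 = 2$ is $pop_2$, landing in the safe $pop_2\,s$, and no sub-$2$ collapses occur since order-$1$ links have height $1$ and order $1$), and the inductive step exactly as sketched above. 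The remaining bookkeeping — that $\orddec_l$ is insensitive to the lower stacks, and that the link orders are determined by node order on reachable sub-stacks per the link convention — is routine and follows from Lemma \ref{lem:push1pop1_orderdecompo} and the conventions already established.
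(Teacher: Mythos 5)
Your overall strategy matches the paper's: condition~1 of Definition~\ref{dfn:safestack} for $s$ is read off from $\orddec_l(s)=\orddec_l(top_{\ord s}\,s)$, and condition~2 is handled by splitting on whether the collapse order $n-\ord{\lambda\overline\eta_q}+1$ equals $\ord s$ (where the result is exactly $pop_{\ord s}\,s$, safe by hypothesis~1 and hence $k$-safe for any $k$ by Lemma~\ref{lem:stacksafety_immediate results}) or is strictly below $\ord s$ (where the collapse acts only inside the top $(\ord s-1)$-stack). You have also correctly isolated the crux, namely the ``gluing'' step. But your inductive setup for that step does not ground. Writing $s=[s_1\ldots s_r\,s_{r+1}]$ with $m=\ord s$, in the sub-$m$ case the collapse yields $[s_1\ldots s_r\ t]$ with $t=collapse\,{s_{r+1}}_{\prefixof\lambda\overline\eta_q}$, which is still an \emph{order-$m$} stack with the same safe lower part and a strictly smaller top $(m-1)$-stack. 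The gluing you need --- ``$pop_m$ of it is safe and $top_m$ of it is $k$-safe, hence it is $k$-safe'' --- is therefore the present lemma at the \emph{same} order $m$, not ``the analogous lower-order instance,'' so an induction on $\ord s$ is circular at this point: the order-$(m-1)$ instance speaks about order-$(m-1)$ stacks and says nothing about $[s_1\ldots s_r\ t]$.

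The fix is exactly the measure the paper uses: induct on (the size of) $top_{\ord s}\,s$ at fixed order, with base case $\bot_{\ord s-1}$. The recursive call is then legitimate because $|t|<|s_{r+1}|$ while $pop_m$ of the glued stack is unchanged (still $pop_m\,s$, safe by hypothesis~1) and $t$ is $l$-safe or $\ord{\lambda\overline\eta_{q-1}}$-safe by hypothesis~2 applied to $s_{r+1}$. Relatedly, your proposed base case $\ord s=2$ is not as trivial as you claim: collapses of order $1$ (at nodes of order $n$ in the decomposition) can occur within condition~2 and again produce a glued stack $[s_1\ldots s_r\ (pop_1\,{s_{r+1}}_{\prefixof\lambda\overline\eta})]$ of the same order, so even there you need the descent on the top stack rather than on the order. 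With the induction measure corrected, the rest of your argument goes through and coincides with the paper's proof.
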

\begin{proof}
Let $s = [s_1 \ldots s_r~s_{r+1}]$ for some $l\geq0$.
We proceed by induction on  $top_{\ord{s}}\,s=s_{r+1}$.
The base case $s_{r+1} = \bot_{\ord{s}-1}$ is trivial.
Suppose that $s_{r+1}$ is not the empty stack.

\begin{enumerate}[(i)]
\item Clearly $\orddec_l\, s = \orddec_l\, s_{r+1}$, hence since $s_{r+1}$ is $l$-safe the lambda nodes in $\orddec_l\, s$ have all a link of height $1$.

\item Let $\lambda \overline{\eta}$ be a lambda node in $\orddec_l(s) = \orddec_l(s_{r+1})$ such that $n-\ord{\lambda \overline{\eta}}+1 \leq \ord{s}$.
Since its link is of height 1 we have
$collapse~s_{\prefixof \lambda \overline{\eta}} = pop_{n-\ord{\lambda \overline{\eta}}+1}~s_{\prefixof \lambda \overline{\eta}}$.

If $n-\ord{\lambda \overline{\eta}}+1 = \ord{s}$ then
$pop_{n-\ord{\lambda \overline{\eta}}+1}~s_{\prefixof \lambda \overline{\eta}} = pop_{\ord{s}}~s_{\prefixof\lambda\overline\eta} = pop_{\ord{s}}~s$ which is $l$-safe by the first assumption
and Lemma \ref{lem:stacksafety_immediate results}.

Otherwise $n-\ord{\lambda \overline{\eta}}+1 < \ord{s}$ and we have:
\begin{align*}
 & collapse\,s_{\prefixof\lambda\overline\eta} \\
&  \qquad    = pop_{n-\ord{\lambda \overline{\eta}}+1}~s_{\prefixof\lambda\overline\eta}
              & \mbox{since $\lambda\overline\eta$'s link has height 1} \\
 & \qquad  = [ s_1 \ldots s_l\ (pop_{n-\ord{\lambda\overline\eta}+1}~ {s_{r+1}}_{\prefixof\lambda\overline\eta}) ]
           & \mbox{$n-\ord{\lambda \overline{\eta}}+1<\ord{s}$}  \\
&  \qquad  = [s_1 \ldots s_p\ (collapse\, {s_{r+1}}_{\prefixof\lambda\overline\eta}) ]
          & \mbox{since $\lambda\overline\eta$'s link has height 1.}
\end{align*}

By the second assumption, $s_{r+1}$ is $l$-safe therefore $collapse~{s_{r+1}}_{\prefixof\lambda\overline\eta}$
is $l$-safe if $\lambda\overline\eta$ is the last node in the $l$-order decomposition,
and $k$-safe where $k$ is the order of the following node in $\orddec_l(s_{r+1})$ otherwise.

Since $|collapse~{s_{r+1}}_{\prefixof\lambda\overline\eta}| < |s_{r+1}|$ we can use the induction hypothesis
to show that the same hold for $[s_1 \ldots s_p\ (collapse\, {s_{r+1}}_{\prefixof\lambda\overline\eta}) ]$. Therefore it is $l$-safe and so is
$collapse\,s_{\prefixof\lambda\overline\eta}$ by the previous equality.
\qedhere
\end{enumerate}
\end{proof}

\begin{lemma}
\label{lem:pushj_safe_implies_l-safe} Let $n>l\geq 1$ and $s$ be a safe higher-order stack such that $2 \leq n-l+1 \leq \ord{s} \leq n$. Then $push_{n-l+1}\ s$ is $l$-safe.
\end{lemma}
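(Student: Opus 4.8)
The plan is to prove the lemma by induction on $\ord{s}$, peeling off the top order-$\ord{s}$ stack of $push_{n-l+1}\,s$ at each step via Lemma~\ref{lem:cons_qsafety}. First note that since $s$ is safe it is non-empty, and $\ord{s}\geq n-l+1\geq 2$, so $push_{n-l+1}\,s$ is well-defined, has the same order $\ord{s}$ as $s$, and $pop_{\ord{s}}(push_{n-l+1}\,s)=s$. Because $push_{n-l+1}\,s$ has order $\geq 2$, Lemma~\ref{lem:cons_qsafety} will reduce the goal to two sub-claims: that $pop_{\ord{s}}(push_{n-l+1}\,s)$ is safe, which is immediate since it equals $s$; and that $top_{\ord{s}}(push_{n-l+1}\,s)$ is $l$-safe. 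So everything hinges on the second sub-claim, and here I would split according to the definition of $push_{n-l+1}$, i.e.\ whether $n-l+1=\ord{s}$ or $n-l+1<\ord{s}$.

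Writing $s=[s_1\cdots s_m]$ and $s_m = top_{\ord{s}}\,s$, Lemma~\ref{lem:push1pop1_preserves_safety}(i) gives that $s_m$ is safe. In the base case $\ord{s}=n-l+1$, one has $top_{\ord{s}}(push_{n-l+1}\,s)=s_m^{\langle n-l+1\rangle}$; since $s_m$ is $0$-safe of order $n-l$ with $1\leq n-l<n$ (using $1\leq l<n$), Lemma~\ref{lem:incrk_qsafe} applied with $q=0$ yields that $s_m^{\langle n-l+1\rangle}$ is $\max(l,0)=l$-safe, which is the sub-claim. In the inductive case $n-l+1<\ord{s}$, one has $top_{\ord{s}}(push_{n-l+1}\,s)=push_{n-l+1}\,s_m$, and $s_m$ is a safe stack whose order $\ord{s}-1$ satisfies $2\leq n-l+1\leq \ord{s}-1\leq n$ (from $n-l+1<\ord{s}\leq n$); hence the induction hypothesis applies to $s_m$ and gives that $push_{n-l+1}\,s_m$ is $l$-safe, which is again the sub-claim. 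The induction bottoms out precisely at $\ord{s}=n-l+1$, the least order compatible with the hypotheses.

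I expect the only genuinely delicate content to be what is already packaged inside Lemma~\ref{lem:incrk_qsafe}, used in the base case: the operation $(\cdot)^{\langle n-l+1\rangle}$ increments only links of order $n-l+1$, i.e.\ those of non-prime lambda nodes of order exactly $l$, and no such node can occur in $\orddec_l$ (whose nodes all have order strictly greater than $l$), so the height-$1$ condition on the order-decomposition survives and the recursive collapse conditions are discharged by that lemma's own induction. Granting it, the remaining work is purely bookkeeping: tracking the bounds $2\leq n-l+1\leq\ord{s}\leq n$ through the recursive unfolding of $push_{n-l+1}$ and checking the side conditions of Lemmas~\ref{lem:cons_qsafety} and~\ref{lem:incrk_qsafe} at each level.
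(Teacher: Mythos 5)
Your proof is correct and is essentially the paper's own argument: the same induction on $\ord{s}$ splitting on whether $n-l+1=\ord{s}$, with Lemma~\ref{lem:push1pop1_preserves_safety}(i) giving safety of the top stack, Lemma~\ref{lem:incrk_qsafe} (with $q=0$) handling the base case, and Lemma~\ref{lem:cons_qsafety} reassembling the full stack at each level. The only cosmetic difference is that you invoke Lemma~\ref{lem:cons_qsafety} up front to isolate the two sub-claims, whereas the paper applies it last; the bound-checking you carry out matches the paper's implicit side conditions.
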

\begin{proof}
Let $s=[s_1 \ldots s_{c+1}]$ be a safe higher-order stack such that $2 \leq n-l+1 \leq \ord{s} \leq n$. Then by Lemma \ref{lem:push1pop1_preserves_safety}(i), $s_{c+1}$ is safe.

We show that $push_{n-l+1}~s$ is $l$-safe by finite induction on the order of $s$.
    \begin{compactitem}
      \item Base case: $\ord{s} = n-l+1 $. We have
    $push_{n-l+1}~s = [ s_1 \ldots s_{c+1} s_{c+1}^{\langle n-l+1
    \rangle}]$.

    Since $s_{c+1}$ is safe,  by Lemma \ref{lem:incrk_qsafe} $s_{c+1}^{\langle n-l+1\rangle}$ is $l$-safe, and by Lemma
    \ref{lem:cons_qsafety},  $[ s_1 \ldots s_{c+1} s_{c+1}^{\langle n-l+1
    \rangle}]$ is $l$-safe.

      \item Suppose $\ord{s} > n-l+1$. Then
    $push_{n-l+1}~s = [ s_1 \ldots s_{c+1} push_{n-l+1}\,s_{c+1}]$.

    Since $s_{c+1}$ is safe, by the
    induction hypothesis $push_{n-l+1}~s_{c+1}$ is $l$-safe, and by Lemma \ref{lem:cons_qsafety} so is $[ s_1 \ldots s_{c+1} push_{n-l+1}\,s_{c+1}]$.
\qedhere
    \end{compactitem}
\end{proof}

\subsection{Simulation and proof of correctness}
\begin{proposition}
\label{prop:reachconf_safe}
Let $G$ be an incrementally-bound recursion scheme.
The $\rightarrow$-reachable configurations of $CPDA(G)$ are safe.
\end{proposition}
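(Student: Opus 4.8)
The plan is to prove Proposition \ref{prop:reachconf_safe} by induction on the length of the $\rightarrow$-reachable computation, strengthened to cover \emph{all} reachable configurations (including intermediate ones arising during a single $\rightarrow$-step), since the transition rules in Figure \ref{fig:cpdaprime} are composites of the primitive operations $push_1$, $pop_1^p$, $collapse$ and $push_{n-l+1}$. Concretely, I would establish the invariant that for every reachable configuration $c$, the stack $c$ is safe (i.e.\ $0$-safe), and then show each primitive instruction appearing in a $\rightarrow$-step preserves a suitable safety property, finally checking that the \emph{composite} of each full step sends a safe configuration to a safe configuration. The base case is the initial configuration $c_0 = push_1\,\lambda\,\bot_n$, which is safe: $\bot_n$ is trivially safe and Lemma \ref{lem:push1pop1_preserves_safety}(iv) handles the $push_1$ of the root (dummy) lambda node, whose order is maximal so $\ord a \geq 0 = l$ applies.

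For the inductive step I would go case by case through (A), (S), (L), $(V_0)$, $(V_1)$. The label cases (A), (S), (L) are all single $push_1^{v}$ operations where $v$ is a child node; (A) pushes a prime lambda node with a $1$-link, (L) pushes a (possibly non-prime) lambda node, and (S) pushes a non-lambda node. Safety preservation here follows directly from Lemma \ref{lem:push1pop1_preserves_safety}(iii)--(iv), noting that when a lambda node $v$ is pushed with $\ord v \geq 0$ the resulting stack is again $0$-safe, and when the pushed node is not a lambda node part (iii) applies verbatim. The genuinely interesting cases are the variable rules. For $(V_0)$ the step is $pop_1^p ; collapse ; push_1^{E_i(top_1)}$ starting from a configuration $c$ whose $top_1$ is a variable $x$ of ground type; by Lemma \ref{lem:binder_in_ordecompos}, $x$'s binder $\lambda\overline\eta$ is the last lambda node in $\orddec_{\ord x}(c)$, and because the recursion scheme is incrementally-bound the $pop_1^p$ operations peel off exactly the nodes of $u_1$ above that binder, so that $pop_1^p\,c = c_{\prefixof \lambda\overline\eta}$; then Lemma \ref{lem:safecollapsesimulation} (applicable since $c$'s relevant $top_2$ or $c$ itself is $0$-safe by the induction hypothesis and Lemma \ref{lem:push1pop1_preserves_safety}(i)) tells us $collapse\,c_{\prefixof\lambda\overline\eta}$ equals either $pop_1\,c_{\prefixof\lambda\overline\eta}$ or $pop_{n-\ord\lambda\overline\eta+1}\,c_{\prefixof\lambda\overline\eta}$, and the definition of $l$-safety (clause 2) guarantees the collapsed stack is again safe; finally the trailing $push_1^{E_i(top_1)}$ pushes the appropriate child node and safety is restored via Lemma \ref{lem:push1pop1_preserves_safety}. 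For $(V_1)$ the step additionally begins with $push_{n-l+1}$ where $l = \ord x \geq 1$; Lemma \ref{lem:pushj_safe_implies_l-safe} shows $push_{n-l+1}\,c$ is $l$-safe, after which the $pop_1^p ; collapse$ portion operates within the newly duplicated top $(n-l)$-stack and I would argue, again using the incremental-binding property and Lemmas \ref{lem:safecollapsesimulation} and \ref{lem:binder_in_ordecompos}, that it lands on the prefix at the binder and the collapse brings us back to a safe stack, with the concluding $push_1^{E_i(top_1), n-l+1}$ handled by Lemma \ref{lem:push1pop1_preserves_safety}(iii).

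The main obstacle I anticipate is twofold. First, carefully tracking the order parameter $l$ through the composite operation in $(V_1)$: the $push_{n-l+1}$ produces an $l$-safe (not necessarily $0$-safe) stack, and I must verify that the subsequent $pop_1^p ; collapse$ operations, which act inside the duplicated top stack, correctly consume precisely the segment of the order-decomposition down to $x$'s binder — this is exactly where incremental-binding is essential, since in an unsafe scheme the binder need not coincide with the first higher-order lambda node and the span $p$ would not match the length of $u_1$. Second, I must be scrupulous that Lemma \ref{lem:safecollapsesimulation}'s hypotheses (that the stack is a sub-stack of a reachable configuration and that $top_2$, or the stack itself, is $l$-safe) are met at each invocation; this requires threading the induction hypothesis through the intermediate configurations and invoking Lemma \ref{lem:push1pop1_preserves_safety}(i) to pass safety from a stack to its top sub-stack. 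Once these bookkeeping points are settled, each individual step is a mechanical application of the preservation lemmas already proved, so the argument is essentially an orchestration of Lemmas \ref{lem:push1pop1_orderdecompo}--\ref{lem:pushj_safe_implies_l-safe} rather than any fundamentally new construction.
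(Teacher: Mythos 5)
Your overall architecture (induction on $\rightarrow$-steps, base case, case analysis on $top_1\,c$, and the use of Lemmas \ref{lem:binder_in_ordecompos}, \ref{lem:safecollapsesimulation}, \ref{lem:push1pop1_preserves_safety} and \ref{lem:pushj_safe_implies_l-safe}) matches the paper's proof, and your treatment of (A), (S), (L) and $(V_0)$ is essentially correct (modulo a harmless mix-up: it is (S) that pushes a dummy lambda node and (L) that pushes a non-lambda node). But there is a genuine gap in case $(V_1)$, and it is precisely at the crux of the proposition. After $push_{n-l+1};pop_1^p;collapse$ you reach an $l$-safe stack $t$, and the step ends with $push_1^{E_i(top_1),\,n-l+1}$, i.e.\ the push of a \emph{lambda node of order exactly $l$ carrying a link of order $n-l+1$}. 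You propose to discharge this with Lemma \ref{lem:push1pop1_preserves_safety}(iii), but (iii) applies only to non-lambda nodes, and (iv) applies only to $(1,1)$-links; neither covers this push. The pushed node has order $l \geq \;$the threshold $0$, so it becomes the last element of $\orddec_0(c')$, and clause 2 of Definition \ref{dfn:safestack} then demands that $collapse\,c'$ (collapsing at this \emph{new} node, through its order-$(n-l+1)$ link) be safe. No preservation lemma gives you this. The paper closes this obligation with the key identity $collapse\,c' = pop_{n-l+1}\,c' = c$: the intervening $pop_1$, $push_1$ and the inner $collapse$ all act strictly inside the top $(n-l)$-stack (because $x$'s binder has order $>l$, hence a link of order $<n-l+1$), so popping at order $n-l+1$ restores the original configuration $c$, which is safe by the induction hypothesis. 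This is where the induction hypothesis actually earns its keep, and your proposal never supplies it or any substitute.

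A secondary problem is your stated strengthening of the invariant to ``every reachable configuration, including intermediate ones, is safe.'' That invariant is false: $push_{n-l+1}\,c$ is only $l$-safe, not $0$-safe (the duplication can raise a link height to $2$ on a node of order $\leq l$, cf.\ the proof of Lemma \ref{lem:incrk_qsafe}), and the intermediate stack $t$ is likewise only $l$-safe. You half-acknowledge this in your ``obstacles'' paragraph, but it contradicts the invariant you set out to prove; the induction must be restricted to $\rightarrow$-reachable configurations, with the intermediate stacks tracked at varying safety levels $l$ inside each case, exactly as the paper does.
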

\begin{proof}
If $n =\ord{c} =1$ then the result holds trivially since $CPDA(G)$ does not contain
any transition of the form $push_j$ for $j>1$ and therefore the links in a reachable configuration all have a height component equal to $1$.

Take $n\geq 2$. We proceed by induction on the $\rightarrow$-step relation. The initial configuration is clearly safe.
Suppose that $c$ is a safe $\rightarrow$-reachable configuration and that
$c \rightarrow c'$. We do a case analysis on $top_1\,c$:
\begin{itemize}
\item (A): We have $c'=push_1^{E_0(u),1}\,c =push_1\,E_0(u)^{(1,1)}\,c$ where $E_0(u)$ denotes a lambda node. It is safe by Lemma \ref{lem:push1pop1_preserves_safety}(iv).

\item (S):
We have $c' = push_1^{a}\,c = push_1\,a^{(j,k)}\,c$ for some dummy lambda node $a$
and undetermined link $(j,k)$. It is safe by Lemma \ref{lem:push1pop1_preserves_safety}(iv) since $\ord a = 0$.

\item (L): We have $c' = push_1^{E_1(u)} = push_1 E_1(u)^{(j,k)}$ where $E_1(u)$ is not a lambda node
 and $j,k\geq1$ are undetermined. It is safe by Lemma \ref{lem:push1pop1_preserves_safety}(iii).

\item ($V_1$) \& ($V_0$): Suppose $u$ is labelled by a variable $x$ of order $l$.
Since $c$ is safe we have $\orddec(c) = \langle \lambda\overline\eta_r^1 , \ldots, \lambda\overline\eta_1^1
\rangle$, $r\geq 0$. Since the recursion-scheme $G$ is safe, by Lemma \ref{lem:binder_in_ordecompos}, $x$'s binder is precisely the last node of $\orddec_l(c)$. Let $b$ be its index in $\orddec(c)$, and $i\geq 1$ be the index of $x$
in $\overline\eta_b$.

\begin{compactitem}
\item  ($V_1$): $l\geq 1$.
We have $c'= push_1\,E_i(top_1)^{(n-l+1,1)}\, t$ where
$t$ is given by $(push_{n-l+1};pop_1^p;collapse)(c) = collapse ((push_{n-l+1}\,c)_{\prefixof \lambda
\overline{\eta}_{b}} )$.
By Lemma \ref{lem:pushj_safe_implies_l-safe} $push_{n-l+1}\, c$ is $l$-safe therefore, since $\lambda\overline\eta_b$ is the last node in $\orddec_l(c)$, by definition of $l$-safety we have that $t$ is $l$-safe. Finally the lambda node $E_i(top_1)$ pushed by the last operation has precisely order $l =\ord(x)$ therefore
\begin{align*}
\orddec_0 (c') &= \langle \lambda\overline\eta_r^1 , \ldots, \lambda\overline\eta_b^1, E_i(top_1(t))^1\rangle \ .
\end{align*}
Thus all the lambda nodes in $\orddec_0 (c')$ have a link of height $1$.

We now need to show that safety is preserved when collapsing at nodes of $\orddec_0 (c')$. Let $b\leq j \leq r$, we have $c'_{\prefixof \lambda\overline\eta_j} =
t_{\prefixof \lambda\overline\eta_j}$. For $j>b$, the l-safety of $t$ implies that $collaspse\, c'_{\prefixof \lambda\overline\eta_j}$ is $\ord{\lambda\overline\eta_{j-1}}$-safe as required. For $j=b$ it gives that $collaspse\, c'_{\prefixof \lambda\overline\eta_b}$ is $l$-safe as required since $l = \ord{E_i(top_1(t))}$.

Now it remains to show that $collapse(c'_{\prefixof{E_i(top_1(t))}}) = collapse\, c'$ is safe.
Since we have $i\geq 1$ the node $top_1(c') = E_i(top_1(t))$ is not a prime lambda node, thus by Lemma \ref{lem:safecollapsesimulation} we can simulate the $collapse$ by a $pop$ of order $n-\ord{E_i(top_1(t))}+1$:
\begin{align*}
collapse\, c' &=  pop_{n-\ord{E_i(top_1(t))}+1}\,c'  \\
               &= pop_{n-l+1}\, c' \\
               &= (push_{n-l+1};pop_1^p;collapse;push_1^{E_i(top_1),n-l+1};pop_{n-l+1})\,c
\end{align*}
The operation $pop_1$ and $push_1$ only affects the top $1$-stack.
Furthermore, since $x$'s binder has order $>l$, its link has order $<n-l+1$ therefore
the collapse operation following the $pop_1^p$ only affects the
top $(n-l)$-stack. Consequently, the operation $pop_{n-l+1}$ effectively restores the configuration
to its value prior to performing the $push_{n-l+1}$ operation:
$$collapse\, c' = c \ .$$
Hence $c'$ is safe.

\item ($V_0$): $l=0$ (which implies that  $b=1$). The configuration $c'$ is given by
$push_1 E_i(top_1)\,collapse(c_{\prefixof \lambda \overline\eta_b})$.
Since $c$ is safe by definition so is $collapse(c_{\prefixof \lambda \overline\eta_b})$.
Since the pushed lambda node $E_i(top_1)$  has order $l=0$, by
Lemma \ref{lem:push1pop1_preserves_safety}(iv) $c'$ is safe.
\qedhere
\end{compactitem}
\end{itemize}
\end{proof}

\begin{definition}[Simulating PDA]
\label{simulating_pda}
Let $G$ be an incrementally-bound recursion scheme.
We define $PDA(G)$ as the higher-order PDA obtained from
$CPDA(G)$ by replacing every $collapse$ operation by $pop_1$ if $top_1\, s$ is prime, and by $pop_{n-\ord{top_1(s)}+1}$ otherwise.
\end{definition}

\begin{theorem}[Correctness of the simulation]
\label{thm:correctness_simulation}
$PDA(G)$ and $CPDA(G)$ are equivalent.
\end{theorem}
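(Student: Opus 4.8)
The plan is to show that $PDA(G)$ and $CPDA(G)$ generate the same tree by proving that they simulate each other step-for-step: starting from the common initial configuration $c_0 = push_1\,\lambda\,\bot_n$, the two automata visit the same sequence of $\rightarrow$-reachable configurations, take the same transitions (A), (S), (L), $(V_1)$, $(V_0)$ in the same order, and emit the same output symbols. Since the only syntactic difference between the two automata is that $PDA(G)$ replaces each $collapse$ by $pop_1$ (when $top_1\,s$ is a prime lambda node) or by $pop_{n-\ord{top_1(s)}+1}$ (otherwise), and since the choice of transition at each step depends only on the label of $top_1$ of the current configuration, it suffices to prove the following invariant: \emph{along the common run, every time a $collapse$ is executed inside some $\rightarrow$-step, it produces exactly the same result as the corresponding $pop_1$ or $pop_{n-\ord{top_1}+1}$ operation substituted by Definition~\ref{simulating_pda}.}

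The key tool is Lemma~\ref{lem:safecollapsesimulation} (Collapse simulation): whenever $s$ is a sub-stack of a reachable configuration and either $\ord s \geq 2$ with $top_2\,s$ being $l$-safe, or $\ord s = 1$ with $s$ itself being $l$-safe, then for any $\lambda$-node $\lambda\overline\eta$ in $\orddec_l(s)$ we have $collapse\,s_{\prefixof\lambda\overline\eta} = pop_1\,s_{\prefixof\lambda\overline\eta}$ if $\lambda\overline\eta$ is prime and $pop_{n-\ord\lambda\overline\eta+1}\,s_{\prefixof\lambda\overline\eta}$ otherwise. So I would proceed by induction on the $\rightarrow$-step relation, carrying along the hypothesis that the current $\rightarrow$-reachable configuration $c$ is identical in both automata and is safe. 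Safety of $c$ is exactly Proposition~\ref{prop:reachconf_safe}. The inductive step is a case analysis on $top_1\,c$ mirroring the proof of Proposition~\ref{prop:reachconf_safe}: for cases (A), (S), (L) no $collapse$ is involved, so the two automata trivially agree. For case $(V_0)$ the step is $pop_1^p;collapse;push_1^{E_i(top_1)}$; here $p$ is the span of the variable node $u$, and after $pop_1^p$ the top-$1$ symbol is the prefix at $x$'s binder $\lambda\overline\eta_b$, which by Lemma~\ref{lem:binder_in_ordecompos} lies in $\orddec_l(c)$ with $l = \ord x = 0$; by Lemma~\ref{lem:safecollapsesimulation} applied to the (safe) sub-stack the $collapse$ equals the $pop$ prescribed by $PDA(G)$. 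For case $(V_1)$ the step is $push_{n-l+1};pop_1^p;collapse;push_1^{E_i(top_1),n-l+1}$; here I would invoke Lemma~\ref{lem:pushj_safe_implies_l-safe} to see that $push_{n-l+1}\,c$ is $l$-safe, so that after the $pop_1^p$ the configuration is a sub-stack whose $top_2$ is $l$-safe and whose $top_1$ is at a node of $\orddec_l$, allowing Lemma~\ref{lem:safecollapsesimulation} to replace the $collapse$ by the correct $pop$. In each case, because the $collapse$ and its replacement yield the same stack, the subsequent $push_1$ (and hence the next $\rightarrow$-reachable configuration) coincides in the two automata, closing the induction.

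I expect the main obstacle to be bookkeeping around the intermediate (non-$\rightarrow$-reachable) configurations: Lemma~\ref{lem:safecollapsesimulation} is stated for sub-stacks of reachable configurations and requires knowing that the $top_2$ (or the whole stack, in the order-$1$ case) is $l$-safe \emph{at the precise intermediate point} where the $collapse$ fires, not merely that the enclosing $\rightarrow$-reachable configuration is safe. So the real content of the argument is to track how $l$-safety propagates through the prefix operations $push_{n-l+1}$ and $pop_1^p$ inside a single $\rightarrow$-step — which is exactly what Lemmas~\ref{lem:push1pop1_preserves_safety}, \ref{lem:incrk_qsafe}, \ref{lem:cons_qsafety} and \ref{lem:pushj_safe_implies_l-safe} were built to supply. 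Once that propagation is in place the equivalence of the generated trees follows because the two automata produce identical output sequences. A clean way to package this is to state and prove, by the same induction, the stronger claim ``$PDA(G)$ and $CPDA(G)$ have the same set of $\rightarrow$-reachable configurations and the same transition behaviour on each,'' from which equivalence of the generated $\Sigma$-labelled trees is immediate by the definition of the tree accepted by a (C)PDA.
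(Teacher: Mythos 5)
Your proposal is correct and follows essentially the same route as the paper: both reduce the theorem to the observation that $collapse$ only fires in steps $(V_1)$ and $(V_0)$, and both justify the replacement via the chain Proposition~\ref{prop:reachconf_safe} (reachable configurations are safe), Lemma~\ref{lem:pushj_safe_implies_l-safe} ($push_{n-l+1}\,c$ is $l$-safe), Lemma~\ref{lem:binder_in_ordecompos} (the binder reached after $pop_1^p$ lies in $\orddec_l$), and Lemma~\ref{lem:safecollapsesimulation} (the $collapse$ equals the prescribed $pop$). Your explicit induction on the $\rightarrow$-step relation maintaining that the two automata share the same reachable configurations is just a more carefully packaged version of the same argument.
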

\begin{proof}
In $CPDA(G)$, the collapse operation occurs only in the steps ($V_1$) and ($V_0$). For ($V_1$) it is of the form:
$$collapse(pop_1^p(push_{n-l+1}~c))$$
for some $\rightarrow$-reachable configuration $c$, where $top_1\,c$ is a variable $x$ of order $l$ and span $p$.
By the previous proposition, $c$ is safe and by Lemma \ref{lem:pushj_safe_implies_l-safe} $push_{n-l+1}\, c$ is $l$-safe. Since $x$ has span $p$, after the operation $pop_1^p$ the top stack symbol is precisely $x$'s binder, which by Lemma \ref{lem:binder_in_ordecompos}, belongs to $\orddec_l(c)$, therefore
by Lemma \ref{lem:safecollapsesimulation} the collapse can be soundly simulated by a $pop$ of order
$1$ if $x$'s binder is a prime node, and a pop of order $n-\ord(top_1(pop_1^p(push_{n-l+1}\,c))) +1$ otherwise.

The case ($V_0$) is proved similarly.
\end{proof}

\begin{remark}
The result also holds for the slightly different definition of the automaton $CPDA(G)$ from the original HMOS transformation \cite{hmos-lics08}:
Clearly the two CPDAs have the same set of $\rightarrow$-reachable configurations
so Proposition \ref{prop:reachconf_safe} clearly still holds.
The simulating PDA from Def.\ \ref{simulating_pda}, however, is obtained 
by replacing every $collapse$ operation in the transition of the CPDA by $pop_{n-\ord{top_1(s)}+1}$. Also the equality in Lemma \ref{lem:safecollapsesimulation} becomes:
$$collapse\,s_{\prefixof\lambda\overline\eta} = pop_{n-\ord \lambda\overline\eta +1}\,s_{\prefixof\lambda\overline\eta} \ .$$
The correctness of the simulation follows similarly.
\end{remark}

\section{From PDA to incrementally-bound RS}
\label{sec:PDAtoIncrementallyBoundRS}
\begin{proposition}
	Let $\mathcal{A}$ be an order-$n$ PDA, and $\Sigma$ a ranked alphabet. There exists an order-$n$ incrementally-bound recursion scheme $R$
	such that for every $\Sigma$-labelled tree $t$, $t$ is accepted by $\mathcal{A}$ if and only if it is generated by $R$.
\end{proposition}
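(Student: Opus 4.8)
The plan is to invoke the HMOS construction in the PDA-to-HORS direction and then show that the recursion scheme it produces is automatically incrementally-bound, using the special structure of PDA stacks (no collapse links). First I would recall that an order-$n$ PDA is a particular case of an order-$n$ CPDA: every order-$1$ element carries only the trivial $1$-link to its immediate predecessor, and the $collapse$ operation (when present at all) coincides with $pop_1$. By Theorem~\ref{thm:HMOS-equiexpr}, applying the HMOS transformation in the automaton-to-scheme direction to $\mathcal{A}$ yields an order-$n$ recursion scheme $R$ generating exactly the tree accepted by $\mathcal{A}$. It remains only to verify that $R$ is incrementally-bound.

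The key step is the observation that the HMOS scheme-from-automaton construction produces, for an automaton with no genuine (order~$\geq 2$) links, a scheme whose non-terminals encode stack configurations and whose production rules mirror the $push_1$/$pop_1$ transitions in a way that never creates a variable binding ``skipping over'' an intermediate lambda of lower order. Concretely, I would argue as follows: in the HMOS construction each variable occurring in a right-hand side corresponds to reading the top-of-stack symbol, and its binder corresponds to the non-terminal (equivalently, lambda node in the computation tree) introduced at the immediately preceding $push_1$. Because a PDA pushes and pops order-$1$ symbols one at a time and the only link available is the $1$-link to the predecessor, the binder of each variable node is always the \emph{nearest enclosing} lambda node of strictly greater order on the path to the root — which is exactly the incremental-binding condition of the definition in Section~\ref{sec:IncrementallyBoundRSToPDA}. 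I would make this precise by analysing the finite unfoldings of the computation tree of $R$ and checking the binder condition node-by-node, appealing to the extension of incremental-binding to recursion schemes via finite approximants.

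An alternative, cleaner route — which I would actually prefer to present — is to go through Proposition~\ref{prop:horsSafeBinderCharact} backwards: rather than proving incremental-binding directly, show that the HMOS scheme obtained from a PDA is \emph{safe} in the original sense of Knapik \emph{et al.} (homogeneously typed and satisfying the operand-position syntactic restriction), and then quote Proposition~\ref{prop:horsSafeBinderCharact} to conclude it is incrementally-bound. Safety of the Knapik-style scheme extracted from a PDA is essentially the content of the ``if'' direction of the original Knapik \emph{et al.} equi-expressivity theorem \cite{KNU02}: their construction from an order-$n$ pushdown automaton yields a \emph{safe} order-$n$ recursion scheme. One then only needs to reconcile the HMOS construction with the Knapik \emph{et al.} construction — i.e., check that when the input automaton has no links, the HMOS-generated scheme coincides with (or is readily seen to be equivalent to) a safe scheme — and invoke Proposition~\ref{prop:horsSafeBinderCharact} to transfer safety to incremental-binding.

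The main obstacle is the bookkeeping in matching the HMOS automaton-to-scheme construction against the link-free structure of a PDA: one must track exactly how stack symbols, states, and the order-$n$ nesting of the PDA stack are encoded into non-terminal types and right-hand sides, and verify that no production rule ever places a subterm containing a low-order parameter in operand position with strictly larger order — equivalently, that the induced computation tree has every variable bound incrementally. Once that structural correspondence is pinned down, both the direct verification and the reduction-to-safety route are routine; I would present the reduction-to-safety route since it lets us outsource the hardest combinatorics to the already-established Knapik \emph{et al.} construction and Proposition~\ref{prop:horsSafeBinderCharact}.
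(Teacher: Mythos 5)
Your preferred route --- establish that the constructed scheme is \emph{safe} in the Knapik \emph{et al.} sense (homogeneous types plus the operand-position restriction of Definition~\ref{def:safers}) and then invoke Proposition~\ref{prop:horsSafeBinderCharact} to conclude incremental binding --- is exactly the strategy of the paper's proof, and the overall reduction is sound. The difference lies in how safety of the constructed scheme is established, and here your plan has a gap. The paper does \emph{not} reconcile the HMOS construction with the Knapik \emph{et al.} one; it verifies safety directly on the HMOS output. The step you are missing is that the raw HMOS automaton-to-scheme construction is \emph{not} homogeneously typed: the non-terminals carry extra parameters $\overline\Phi$ whose sole purpose is to implement the $collapse$ operation, and these are what break homogeneity. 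Because a PDA never collapses, these parameters can be deleted, and only after this simplification do the non-terminals acquire the homogeneous types $\mathcal{F}_p^{a,e} : (n-1)^m \rightarrow \cdots \rightarrow 0^m \rightarrow 0$. The remaining verification is then a finite check of the operand-position condition on the three rule shapes (for $push_1^{b,k}$, $push_j$ and $pop_k$). Without the removal of $\overline\Phi$, the claim that the HMOS output is safe is false as stated, and your proposed ``reconciliation'' with the Knapik \emph{et al.} construction is not a substitute: the two constructions are different and do not coincide syntactically, so ``readily seen to be equivalent to a safe scheme'' would itself require an argument at least as long as the direct verification.

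That said, a cleaner variant of your second route does work and is worth noting: bypass HMOS entirely and cite the ``if'' direction of the Knapik \emph{et al.} equi-expressivity theorem, which already produces a \emph{safe} order-$n$ recursion scheme generating the tree accepted by $\mathcal{A}$; after discarding dead rules (which does not change the generated tree), Proposition~\ref{prop:horsSafeBinderCharact} immediately yields incremental binding. This is shorter than the paper's proof but less self-contained and less uniform, since the paper deliberately routes both directions of its main theorem through the HMOS constructions. Your first route (checking the binder condition node by node on finite unfoldings) is not developed enough to assess; as sketched it asserts rather than proves that each variable's binder is the nearest enclosing lambda of strictly greater order.
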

\begin{proof}
Let $\mathcal{A}$ be an order-$n$ PDA. Following the HMOS transformation from order-$n$ CPDA to order-$n$ recursion scheme \cite{hmos-lics08} we show how to produce an equivalent homogeneously-typed recursion scheme that is incrementally-bound.

Let's consider the PDA $\mathcal{A}$ as a CPDA. The HMOS transformation yields an equivalent recursion scheme
$R = \langle \Sigma, \mathcal{N}, \mathcal{R}, S \rangle$.
(Note: we refer the reader to \cite{hmos-lics08} for the definition of the production rules
$\mathcal{R}$.)

Since a PDA, viewed as a CPDA, never makes used of the $collapse$ operation, we can get rid of the parameters $\overline\Phi$ from the
HMOS construction, used to implement the collapse stack operation via production rules.
With this simplification, the type of the non-terminal $\mathcal{F}_p^{a,e}$  for each stack symbol $a$, $1\leq e \leq n$, and state $1\leq p \leq m$  becomes:
 $$ \mathcal{F}_p^{a,e} : (n-1)^m \typear \ldots \typear 0^m \typear 0$$

which is homogeneously-typed.

The production rules get simplified to the general form:
 $$ \mathcal{F}_p^{a,e} \  \overline{\Psi_{n-1}} \ldots \overline{\Psi_0} \stackrel{(q,\theta)}\rightarrow \Xi_{(q,\theta)}$$
for every state $q$ and stack operation $\theta$ of the PDA's transition function,
where the right-hand side $\Xi_{(q,\theta)}$ is given by the following table:
$$\begin{array}{ll}
\mbox{\bf Operation $\theta$} & \mbox{\bf Applicative term } \Xi_{(q,\theta)} \\[10pt]
push_1^{b,k} & \mathcal{F}_q^{b,k} \langle \mathcal{F}_i^{a,e} \overline{\Psi_{n-1}} | i \rangle \overline{\Psi_{n-2}} \ldots
 \overline{\Psi_0} \\[10pt]
push_j & \mathcal{F}_q^{a,e} \overline{\Psi_{n-1}} \ldots \overline{\Psi_{n-(j-1)}}\langle \mathcal{F}_i^{a,e} \overline{\Psi_{n-1}} \ldots \overline{\Psi_{n-j}} | i \rangle \overline{\Psi_{n-(j+1)}} \ldots
 \overline{\Psi_0} \\[10pt]
pop_k & \Psi_{n-k,q} \overline{\Psi_{n-k-1}} \ldots \overline{\Psi_{0}}
 \end{array}
 $$

It is easy to verify that every application term in the above table obeys the syntactic restriction of the safety constraint from Def.\ \ref{def:safers}.
Hence since the recursion scheme is homogeneously-typed, by Proposition \ref{prop:horsSafeBinderCharact}, it is also incrementally-bound.
\end{proof}

\bibliographystyle{abbrv}
\bibliography{dphil-all}

\end{document}